
\relax


\documentclass[letterpaper]{article} 
\usepackage{aaai19_mod}  
\usepackage{times}  
\usepackage{helvet}  
\usepackage{courier}  
\usepackage{url}  
\usepackage{graphicx}  
\frenchspacing  
\setlength{\pdfpagewidth}{8.5in}  
\setlength{\pdfpageheight}{11in}  

\usepackage{amsmath,amssymb,amsthm,amsfonts}
\usepackage[linesnumbered,algoruled,boxed,lined,algoruled,boxed,lined]{algorithm2e}

\pdfinfo{
/Title ()
/Author ()}
\setcounter{secnumdepth}{2}  

\newtheorem{proposition}{Proposition}
\newtheorem{theorem}{Theorem}
\newtheorem{definition}{Definition}
\newtheorem{lemma}{Lemma}

\usepackage{graphicx}
\usepackage{epstopdf}

\usepackage{url}
\usepackage{enumitem}
\usepackage{wrapfig}
\nocopyright



\usepackage{xcolor}
\ifodd 0
\newcommand{\rev}[1]{{\color{blue}#1}} 
\newcommand{\com}[1]{\textbf{\color{red}(COMMENT: #1)}} 

\newcommand{\adv}[1]{{\color{red}#1}}

\else
\newcommand{\rev}[1]{#1}
\newcommand{\com}[1]{}

\newcommand{\adv}[1]{#1}
\fi


\title{Adversarial Contract Design for Private Data Commercialization}

\author{
Parinaz Naghizadeh$^1{^*}$ and 
Arunesh Sinha$^2$\thanks{Parinaz and Arunesh equally contributed to this work.}
\\ 
$^1$ Electrical and Computer Engineering, Purdue University \\
$^2$ Computer Science and Engineering, University of Michigan\\
parinaz@purdue.edu,
arunesh@umich.edu
}

\begin{document}

\maketitle

\begin{abstract}
The proliferation of data collection and machine learning techniques has created an  opportunity for commercialization of private data by data aggregators. In this paper, we study this data monetization problem using a contract-theoretic approach. Our proposed adversarial contract design framework accounts for the heterogeneity in honest buyers' demands for data, as well as the presence of adversarial buyers who may purchase data to compromise its privacy. We propose the notion of \emph{Price of Adversary} $(PoAdv)$ to quantify the effects of adversarial users on the data seller's revenue, and provide bounds on the $PoAdv$ for various classes of adversary utility. We also provide a fast approximate technique to compute contracts in the presence of adversaries. 
\end{abstract}

\section{Introduction}\label{sec:intro}

The large-scale adoption of data-driven decision making by businesses has led to a boom in big data collection and analysis techniques. With increasing  amount and demand for data, companies have found a business opportunity in offering data-based services to other companies, or selling their data to interested parties \cite{thomas16,spiekermann15}. Interest in data monetization is evidenced by the rise of \emph{data marketplaces}, where firms and individuals can buy, sell, or trade, second or third party data. Examples include Salesforce's Data Studio, Oracle's BlueKai, and Adobe's Audience Marketplace.
Data commercialization faces many challenges, including IP protection, liability, pricing, and preserving  privacy~\cite{thomas16}. In this paper, we focus on the latter two challenges of pricing and privacy. 

The challenge of pricing refers to the fact that to accommodate diverse demands, data sellers offer different plans and pricing to their customers. 
Even with identical data, customers may derive different benefits from utilizing it, e.g., due to different expertise, or how this data complements the customer's existing knowledge. Therefore, to maximize revenue, the data seller should account for this demand diversity by packaging its data accordingly. 
Further, despite its revenue benefits, data commercialization has to overcome the challenge of limiting privacy risks for the data subjects in the database. Specifically, adversarial buyers can request access to the database, attempting to compromise the privacy of the data subjects.
Therefore, data sellers should  account for this risk when designing and pricing data plans.

In this paper, we take a contract theoretical~\cite{mas95} approach to address both the aforementioned pricing and privacy challenges of data commercialization by proposing the design of a set of contracts with varying privacy levels. Contract theory, in the classic context of pricing of goods, is the study of principal-agent problems, in which the principal (here, the data seller) designs a set of contracts with varying consumption level so as to extract maximum revenue from agents (buyers) with unknown types. 
We study the problem of pricing a bundle of database queries at different privacy levels with the aim of (a) maximizing revenue by offering different prices for varying privacy levels in order to accommodate the diversity of demands for the query bundle, and (b) accounting for the risks from adversarial users by modifying the contracts' pricing accordingly. We use the well accepted \emph{$\epsilon$-differential privacy} concept as the measure of privacy \cite{dwork2008differential}. 
We make an effort to keep our design practical by attempting to adhere to practices already in place in data marketplaces (see Sections \ref{sec:background} and \ref{sec:model}).

\textbf{Technical contributions:} (1) Existing contract theory results suggest that given $n+1$ types of agents ($n$ types of honest buyers based on their diversity of demand, and an adversarial type), the principal should design up to $n+1$ contracts. We show that the data owner will offer at most $n$ contracts. In other words, it is optimal for the data owner to avoid the impractical option of designing a contract for the adversary; (2) we incorporate post-hoc fines (in case of privacy breach) in the pricing of query bundles, and analyze their effect on the contract design problem, showing that fines can be helpful in reducing loss due to the adversarial users in many situations; (3) we propose the notion of \emph{Price of Adversary} ($PoAdv$) to quantify the loss incurred by the data owner due to the presence of adversarial data buyers. We show that while $PoAdv$ can be unbounded in the worst case, it is possible to bound the $PoAdv$ for a large class of problems; and (4) we provide a fast approximate technique to compute the contracts in presence of adversaries.
All omitted and full version of proofs are in the appendix. 





\section{Background}\label{sec:background}
\textbf{Database marketing examples}: 
Currently, the two industries leading database marketing are data brokers (who mine and sell consumer data to businesses), and data marketplaces (which provide a platform for buying, selling, and trading data). We elaborate upon typical privacy guarantees offered by each with an example. Among data brokers, Acxiom, one of the largest brokers worldwide, states that they maintain ``privacy compliant data" through data encryption and secure data management techniques~\cite{acxiom}. The user service agreement of Salesforce Data Studio~\cite{salesforce} on the other hand, provides more detailed information about their market structure. For instance, Salesforce states that 
they use ``unique user identifiers (user IDs) to help ensure that activities can be attributed to the responsible individual'', and that security logs are kept ``in order to enable security reviews and analysis."  
Our model in Section \ref{sec:model} takes the availability of these monitoring techniques into account. It is clear that following such safe practices is imperative when dealing with private information, e.g., as evidenced by the recent Cambridge Analytica case~\cite{cambridge}. 


\textbf{Differential privacy}: A popular formalism of privacy loss due to adversarial queries from statistical databases is that of \emph{differential privacy (DP)} \cite{dwork2006calibrating,dwork2008differential}. Formally, let $\mathcal{K}$ be a randomized algorithm used by a data owner to release answers to queries from a database, and consider two databases $\mathcal{D}_1$ and $\mathcal{D}_2$ that differ in exactly one entry (row). Then, $\mathcal{K}$ is $\epsilon$-differentially private ($\epsilon$-DP) for $\epsilon \geq 0$ if for any output $\mathcal{O}$ of the algorithm,
\begin{align}
Pr(\mathcal{K}(\mathcal{D}_1)\in\mathcal{O}) \leq \exp(\epsilon) \cdot Pr(\mathcal{K}(\mathcal{D}_2)\in\mathcal{O})~.
\label{eq:df-def}    
\end{align}
In words, $\epsilon$-DP requires that the output of $\mathcal{K}$ remains sufficiently unaffected (as quantified by $\epsilon$), whether or not a single data subject's data is included in the database. The choice of $\epsilon$ determines the privacy loss due to $\mathcal{K}$, with lower $\epsilon$ corresponding to better privacy. Note that this privacy guarantee is independent of any  auxiliary information available to an adversary \cite{dwork2008differential}. 

For continuous-valued queries, a method for achieving differential privacy is the introduction of carefully selected random noise in the responses. Specifically, let $f$ be a query function, returning the true value $f(D)$ on database $D$. In order to guarantee $\epsilon$-DP, an algorithm $K$ can introduce additive Laplacian noise, returning instead $f(D)+\text{Lap}(\Delta f/\epsilon)$, where $\Delta f$ is the sensitivity of the query function \cite{dwork2006calibrating}. While this approach limits the privacy loss to within $\epsilon$, it also decreases the utility of the queries to  honest buyers by adding noise that increases in $\frac{1}{\epsilon}$. In Section \ref{sec:model}, we formalize buyers' sensitivity to their queries' accuracy, and consequently, their willingness to pay for more accurate answers. The seller's contract design should balance this tradeoff with the increased privacy loss from adding less noise.

\section{Model}\label{sec:model}
We study the problem of designing a set of contracts for \emph{buyers} requesting access to a database managed by a \emph{seller}. \rev{We assume that the seller has already acquired data from subjects and compensated them using a one-time monetary payment or a free service (like a phone app).} We use he/his for buyers and she/her for the seller. 

\textbf{Queries:} There are multiple (and finite) types of potential statistical queries that can be made from the database, denoted by the set $\mathcal{Q}$. 
The seller offers bundles consisting of a subset of these query types for purchase, with the restriction that any buyer can choose at most one bundle. A bundle is identified by the set $\{Q_1, \ldots, Q_k~|~Q_i \in \mathcal{Q}\}$. The seller designs these bundles based on historical or external information about the types of different buyers, so that every buyers' requirement is met by one of the bundles. Further, for any bundle, the seller limits the number of queries of each type $Q_i$ in the bundle to one (i.e., each bundle is a subset of distinct query types). 
This follows recommended practices in differential privacy, since allowing multiple queries inevitably degrade privacy guarantees (see also Section \ref{sec:related}). We also posit that the seller verifies the identity of  buyers, in order to keep track of the buyer's query purchases, and to investigate a privacy attack if it occurs. Further, we posit that the seller, via her service agreement, restricts buyers from faking identifies by imposing substantial post-hoc fines. 


\textbf{Contracts}: For each bundle $\{Q_1, \ldots, Q_k\}$, the set of possible contracts are determined by the parameters $(p, \epsilon, s)$, with $p\in \mathbb{R}_{\geq 0}$ denoting the price to be paid by the buyer. The privacy levels are assumed to be bounded and normalized such that {$\epsilon\in[0,1]$}, with $\epsilon$ specifying the bound $\epsilon \geq \epsilon_1 + \ldots + \epsilon_k$, where $\epsilon_i$ is used to determine the (Laplace) noise added to the answer of the  query of type $Q_i$; the buyer is free to request any $\epsilon_1, \ldots, \epsilon_k$ within the $\epsilon$ bound, \rev{with higher $\epsilon$ corresponding to less noisy responses.} 
Lastly, $s$ denotes the post-hoc fine to be paid if the buyer is found misusing the query answer.

\textbf{Buyers}: We assume that buyers belong to one of two possible classes: \emph{honest} or \emph{adversarial}.

\emph{Honest buyers}: Honest buyers do not misuse query answers, and hence generate revenue for the operator when purchasing contracts. Each honest buyer for a given bundle has a type $i \in \Theta:=\{1, \ldots, n\}$, determining his benefit from the database. 
In particular, an honest buyer of type $i$ purchasing contract $(p, \epsilon,s)$ derives a \emph{benefit} $b_i(\epsilon):[0,1]\rightarrow \mathbb{R}_{\geq 0}$ from accessing the system. \rev{This function includes direct gain from the data, as well as the cost of hedging against the risk of potential direct attack on the buyer.} 
We impose natural conditions on the benefit functions (as is standard for demand functions) $b_i(\cdot)$: that the overall benefit increases with larger $\epsilon$ (monotone non-decreasing) and satisfies  diminishing returns (concavity), with $b_{i}(0) = 0$.  Most large organizations estimate demand functions and types of buyers from past buyers' activity, and insurance premiums are known; hence, we assume these functions are known. \rev{Further, $b_{i}(\epsilon)\leq b_{{i+1}}(\epsilon), \forall \epsilon, \forall i$; that is, higher types derive further benefit from the same noise level, e.g., due to their expertise or the relevance of the data to their tasks.} 

An honest buyer also has a $\gamma$ probability of suffering an attack himself and causing inadvertent misuse of the query answer, which results in an expected $\gamma s$ loss for him as per the contract terms. Thus, an honest buyer's overall expected utility in its interaction with the seller is given by $u_{i}(p, \epsilon, s)=b_{i}(\epsilon) - p - \gamma s$. 

\emph{Adversarial buyers}: An adversarial buyer seeks to access the database with the goal of compromising its privacy. Formally, an adversarial buyer purchasing a bundle through a contract $(p, \epsilon, s)$ derives a benefit $C(\epsilon): [0,1]\rightarrow \mathbb{R}_{\geq 0}$ from an attack on the system, with overall adversary utility given by $u_A(p, \epsilon, s)=C(\epsilon)-p - s$. This attack results in a cost $C(\epsilon)$ for the seller. Further, we assume $C(\cdot)$ is monotone increasing and convex, with $C(0) = 0$; intuitively, higher $\epsilon$ (lower noise) lead to costlier attacks for the seller, with the severity increasing as the noise decreases. Such convexity has also been noted in literature, e.g., a \rev{recent work~\cite{hsu2014differential}} proposes the cost for seller to be proportional to $\exp(\epsilon) - 1$. Figure~\ref{fig:b-C-P} shows an example of $C$ and $b_i$. 

We assume that a privacy attack is ultimately discovered, and the seller can track the buyer responsible for the attack. \rev{The seller may have to compensate data subjects after a privacy attack (due to lawsuits), which can be partially recovered from the post-hoc fine for data misuse.} Note that we have assumed that the adversary cannot cause privacy loss beyond the given $\epsilon$ of the bundle by combining the outputs of multiple queries of the same type, as the seller restricts the number of queries per type to one. Further, large post-hoc fines for faking identities prevent the rational adversary from faking identities and attempting to purchase two or more bundles. However, the post-hoc fine for data misuse cannot be set too large as this fine affects the honest buyers, and hence the seller's revenue, due to potential attacks on honest buyers. Therefore, our goal is to study the optimal choice of fines for data misuse so as to deter adversarial buyers while maintaining the demand from honest buyers.  

\begin{wrapfigure}{l}{0.25\textwidth}
\centering
\includegraphics[width=0.3\textwidth]{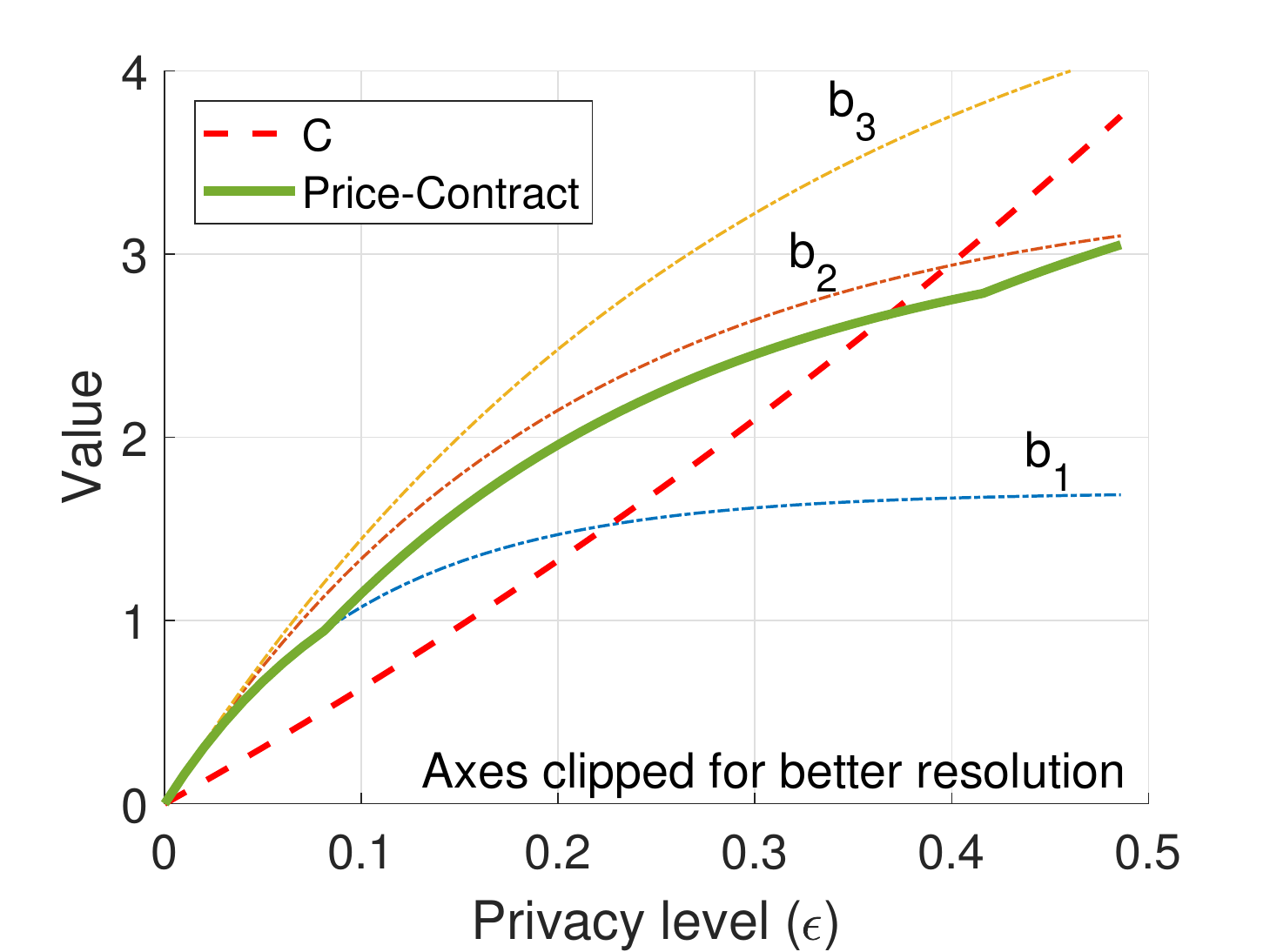}
\caption{Three Benefit fns $b_i(\epsilon)=i(1 - \exp(-\frac{10}{i}\epsilon))$ for $i=1,2,3$, adversary cost $C(\epsilon)=6(\exp(\epsilon)-1)$, and non-adversarial price-contract curve.}
\label{fig:b-C-P}
\end{wrapfigure}

\subsection{Seller's revenue optimization problem}
We now analyze the seller's contract design problem, with one contract $(p,\epsilon, s)$ for each offered bundle. As all rational buyers choose only one bundle due to the marketplace design, these contracts are independent. Therefore, for the rest of this paper, we restrict attention to a given bundle. 

Let $\rho$ denote the fraction of adversarial buyers, which is estimated by the seller (conservatively, the seller can estimate $\rho$ to be at most a maximum value). For the honest buyers, let $\{q_i\}_{i\in\Theta}$ denote the fraction of the honest buyers of type $i$. These fractions can be estimated from historical data. 
 The seller aims to maximize her revenue. Nevertheless, she can not observe individual buyers' types when selling a contract. Consequently, she has to design contracts while balancing two goals: deriving the maximum possible profit from honest types, while limiting the adversarial type's cost to the system. 


In classic contract theory, following the \emph{revelation principle}~\cite[Proposition 14.C.2]{mas95}, it is known that it is enough to offer at most $n+1$ contracts  when the number of buyer types is $n+1$. \footnote{Depending on the distribution of buyers' types, it may be optimal to offer the same contract to adjacent types (\emph{pooling} contracts), instead of separate contracts for each type (\emph{separating} contracts).} 
Each agent then selects his intended contract if it satisfies the agent's \emph{individual rationality} (IR) and \emph{incentive compatibility} (IC) constraints. The IR constraint requires that the agent attains higher utility from purchasing the contract compared to opting out. The IC constraint imposes the condition that an agent of type $i$ prefers his intended contract over that of any type $j\neq i$. 

Formally, for our contract design problem, consider the $n+1$ user types consisting of the $n$ honest buyers and the adversarial type. Let the contract of type $i\in\Theta$ be $(p_i, \epsilon_i, s_i)$ and that for the adversary be $(p_A,\epsilon_A, s_A)$. Assume, wlog, that the utility of opting out of purchasing contracts is zero. Then, the IR constraint of an honest buyer of type $i$, denoted by $\textit{IR}_{i}$, is given by
$u_{i}(p_i, \epsilon_i, s_i) \geq 0$. Similarly, $IR_A$ is $u_{A}(p_A, \epsilon_A, s_A) \geq 0$.
Type $i$'s IC constraints are given by $
u_{i}(p_i, \epsilon_i, s_i) \geq u_{i}(p_j, \epsilon_j, s_j), \forall j\neq i~$
where the $j^{th}$ constraint is denoted by $\textit{IC}_{i,j}$ and $
u_{i}(p_i, \epsilon_i, s_i) \geq u_{i}(p_A, \epsilon_A, s_A)$ which is denoted as $\textit{IC}_{i,A}$. Similarly, the $IC_{A,i}$ constraints can be defined for the adversary. 

The seller's goal is to maximize her revenue $R\big((p_i, \epsilon_i, s_i)_{i \in \Theta}, p_A,\epsilon_A, s_A \big) = (1 - \rho) \big(\sum_{i=1}^n q_i (p_i + \gamma s_i) \big ) + \rho (p_A + s_A - C(l_A))$. \rev{However, the seller only has steady revenue over time from $p_i$; $\gamma s_i$ provides randomly varying revenue over time. Thus, we impose the practical constraint that $p_i \geq (1- \phi) (p_i + \gamma s_i)$, which says that a large fraction $1- \phi$ of revenue arrive steadily over time. We name this the \emph{steady revenue} $SR_i$ constraint.}
Therefore, the seller's contract design problem can be formally stated as the following optimization: 
\begin{eqnarray*}
\max_{(p_i, \epsilon_i, s_i)_{i \in \Theta}, p_A,\epsilon_A, s_A} & R\big((p_i, \epsilon_i, s_i)_{i \in \Theta}, p_A,\epsilon_A, s_A \big)~\\
\mbox{subject to} & IR_i, SR_i \; \forall i \; \mbox{ and } \; IC_{i,j} \; \forall i,j \mbox{ and}\\
&  IR_A \; \mbox{ and } IC_{i,A}, IC_{A,i} \; \forall i \mbox{ and }\\
& p_i,\epsilon_i, s_i \geq 0 \; \forall i \mbox{ and }  p_A,\epsilon_A, s_A \geq 0
\end{eqnarray*}



\subsection{No need for an adversary-specific contract}
The contract design problem above includes a contract $(p_A,\epsilon_A, s_A)$ for the adversary. While the formulation is mathematically sound and consistent with the revelation principle, this seems an odd design choice as the adversary reveals his type just by choosing this contract. We show that, as intuitively expected, it is in fact not required for the seller to design an adversary-specific contract. 

\begin{lemma}\label{lemma:Ncontracts}
{The seller should offer at most $n$ contracts/bundles. In particular, it is never optimal to offer an adversary-specific contract/bundle.}
\end{lemma}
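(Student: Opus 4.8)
The plan is to take an arbitrary optimal solution of the seller's program---which by the revelation principle we may assume consists of the $n$ honest contracts $(p_i,\epsilon_i,s_i)_{i\in\Theta}$ together with an adversary contract $(p_A,\epsilon_A,s_A)$, each type truthfully selecting its own contract---and to exhibit another feasible solution with at least the same revenue in which the adversary's allocation coincides with one of the honest contracts or with the null ``opt-out'' contract. First I would record two consequences of feasibility that drive the argument: (i) from $IR_A$, the per-buyer revenue the adversary contributes is $p_A+s_A-C(\epsilon_A) = -u_A(p_A,\epsilon_A,s_A)\le 0$, so the adversarial fraction is always a weak drag on revenue; and (ii) from the $IC_{A,i}$ constraints, $u_A(p_A,\epsilon_A,s_A)\ge u_A(p_i,\epsilon_i,s_i)$ for every honest $i$, equivalently $p_i+s_i-C(\epsilon_i) \ge p_A+s_A-C(\epsilon_A)$ for every $i$.

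Next I would set $j^{\star}\in\arg\max_{i\in\Theta} u_A(p_i,\epsilon_i,s_i)$ and split into two cases. If $u_A(p_{j^{\star}},\epsilon_{j^{\star}},s_{j^{\star}}) \ge 0$, replace the adversary contract by the honest contract $j^{\star}$, leaving everything else unchanged, so the adversary is \emph{pooled} with honest type $j^{\star}$. If instead $u_A(p_{j^{\star}},\epsilon_{j^{\star}},s_{j^{\star}}) < 0$, every honest contract yields the adversary negative utility, so drop the adversary-specific contract entirely; the adversary then opts out and contributes $0$.

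Then I would check feasibility and the revenue inequality for the modified menu. Feasibility of the honest part is immediate since those contracts are untouched, so $IR_i$, $SR_i$ and $IC_{i,j}$ for $i,j\in\Theta$ still hold; the only constraints to revisit are $IR_A$ and $IC_{i,A}$ for the new adversary allocation, and these reduce respectively to $u_A(p_{j^{\star}},\cdot)\ge 0$ (the case hypothesis, or trivially $0\ge 0$ for the null contract since $C(0)=0$) and to the already-holding $IC_{i,j^{\star}}$ (or, for the null contract, to $IR_i$, using $b_i(0)=0$). For revenue: in the first case the change in the objective is $\rho\big[(p_{j^{\star}}+s_{j^{\star}}-C(\epsilon_{j^{\star}})) - (p_A+s_A-C(\epsilon_A))\big] = \rho\big[u_A(p_A,\cdot)-u_A(p_{j^{\star}},\cdot)\big]\ge 0$ by consequence (ii); in the second case the adversarial contribution rises from the non-positive value $-\rho\,u_A(p_A,\epsilon_A,s_A)$ to $0$. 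Either way the modified menu is feasible with revenue no smaller than the optimum, hence optimal, and it offers at most $n$ distinct (non-null) contracts; since the same reduction applies to any feasible menu, offering an adversary-specific contract is never strictly beneficial.

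The step I expect to be the main obstacle is handling the adversary's endogenous contract choice after the modification: one must account for the adversary's best response---which the seller cannot prevent---so that deviating to honest contract $j^{\star}$ is harmless, and this is exactly where $IC_{A,j^{\star}}$ enters, while the opt-out alternative must be valued at $0$ for both the adversary's utility and the seller's attack cost. A minor point worth stating cleanly is that, because $C$ is increasing, the adversary always exhausts the $\epsilon$-budget of whatever contract it buys, so the realized privacy loss appearing in the revenue term equals that contract's $\epsilon$.
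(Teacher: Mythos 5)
Your proposal is correct and follows essentially the same route as the paper's proof: both rest on $IR_A$ (the adversary-specific contract is a weak loss for the seller) and $IC_{A,i}$ (the adversary's post-removal best response among the honest contracts, or opting out, costs the seller weakly less). Your version is simply a more careful write-up of the paper's two-line contradiction argument, adding the explicit case split for opt-out and the feasibility check of the modified menu.
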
 
\begin{proof}
We show this by contradiction. Assume the seller treats the adversarial buyer as the $(n+1)$-th type, and offers a contract $(p_A, \epsilon_A, s_A)$ satisfying all (honest and adversarial) buyers' IR and IC constraints. By $\textit{IR}_A$, this contract satisfies $C(\epsilon_A)-p_A-s_A\geq 0$; that is, it will impose a loss $p_A+s_A-C(l\epsilon_A)\leq 0$ on the seller's revenue. Further, by the $\textit{IC}_{A,i}$ constraints, $C(\epsilon_A)-p_A-s_A\geq C(\epsilon_i) - p_i - s_i$; that is, had the adversary purchased any of the legitimate buyers' contracts, he would have imposed a smaller cost on the seller's revenue. As the seller is a profit-maximizer, we conclude that such contract $(p_A, \epsilon_A, s_A)$ should not be part of an optimal collection of contracts. 
\end{proof}

%
Given the above lemma, the contract design problem in the adversarial setting is to design contracts $(p_i, \epsilon_i, s_i)_{i \in \Theta}$ in order to maximize the revenue of the operator:
\[
(1 - \rho) \big(\sum_{i=1}^n q_i (p_i + \gamma s_i) \big ) + \rho (p_Z + s_Z - C(\epsilon_Z))~,
\]
where $Z \in \{0,1,\ldots,n\}$ is the contract chosen by the adversary,
subject to IR and IC constraints for all honest buyers in choosing their contract $i$ and the adversary in choosing $Z$. For the special case of the adversary not choosing any contract, we designate $Z=0$ with $p_0 = s_0 = \epsilon_0 = 0$. Observe that $Z$ is a variable, and thus, the revenue maximizing problem is a bi-level optimization problem. However, following the standard technique of introducing an additional variable to formulate a zero-sum problem as a linear program, we formulate the revenue maximization problem in the adversarial setting using variable $r_A$ as follows:
\begin{eqnarray*}
\max_{(p_i, \epsilon_i, s_i)_{i \in \Theta}, r_A} & (1 - \rho) \big(\sum_{i=1}^n q_i (p_i + \gamma s_i) \big ) + \rho (- r_A)~\\
\mbox{subject to} & IR_i, SR_i \; \forall i \; \mbox{ and } \; IC_{i,j} \; \forall i,j \mbox{ and}\\
&  r_A \geq C(\epsilon_i)  - p_i - s_i \; \forall i \; \mbox{ and }\\
& p_i,\epsilon_i, s_i \geq 0 \; \forall i \; \mbox{ and } \;  r_A \geq 0
\end{eqnarray*}
For our described marketplace, one can further consider the corresponding non-adversarial setting, in which the seller solves the contract design problem in the absence of any adversarial considerations. This non-adversarial contract design problem is given by:
\begin{eqnarray*}
\max_{(p_i, \epsilon_i, s_i)_{i \in \Theta}} &  \sum_{i=1}^n q_i (p_i + \gamma s_i)  \\
\mbox{subject to} & IR_i, SR_i \; \forall i,~ IC_{i,j} \; \forall i,j, \mbox{ and, }  p_i,\epsilon_i, s_i \geq 0 \; \forall i 
\end{eqnarray*}

We next study these two contract design problems to characterize the effects of the presence of adversarial types on the optimal contracts' properties and the seller's revenue.

\section{Analysis of Adversarial Contracting}\label{sec:advcontract}

In classic contract theory, when solving for the optimal contracts, the functions $b_i$ are often assumed to satisfy a condition known as the \emph{single crossing property (SCP)}, which in turn implies the strict increasing differences (ID) property. Throughout our analysis, we will only require the (weaker) condition of (non-strict) ID property on the benefit functions $b_i$, as defined below:
\begin{definition}[Increasing Differences]
The functions $b_{i}$ satisfy the (strict) increasing differences property if for any $\epsilon'> \epsilon$, $b_i(\epsilon') - b_i(\epsilon)$ is (strictly) increasing in the type $i$. 
\end{definition}

The above condition 
is a natural assumption on demand functions, and has been used extensively in the contract theory literature starting from the seminal work by~\cite{maskin1984monopoly}. The $b_i$ functions shown in Figure~\ref{fig:b-C-P} satisfy ID. This condition also allows for significant simplification of the classical contract theory optimization problem.
Our first, somewhat surprising  result is that, even in the adversarial contract regime with post-hoc fines, the contracts will satisfy a set of constraints akin to those of non-adversarial settings. 
\begin{theorem} \label{advcontractproperties}
Assuming that the functions $b_i$ satisfy ID, the optimal contracts (in the presence of adversarial types) $(p^*_1,\epsilon^*_1, s^*_1), \ldots, (p^*_n,\epsilon^*_n, s^*_n)$ satisfy the following:
\begin{enumerate}
\item Monotonicity: $\epsilon^*_{i+1} \geq \epsilon^*_i, \forall i$. 
\item Constraint set reduction: $\textit{IR}_i$ for $i > 1$ and $\textit{IC}_{i,j}$ for $j \neq i-1$ are redundant at the optimal contracts.
\item $\textit{IR}_{1}$ is tight: as a result, $p^*_1 + \gamma s^*_1 = b_1(\epsilon^*_1)$.
\item $\textit{IC}_{i+1,i}$ is tight for all $i$: as a result for $i > 1$, 
$$p^*_{i} + \gamma s^*_{i} = b_{i}(\epsilon^*_{i}) - \sum_{j=1}^{i-1} \big( b_{j+1}(\epsilon^*_{j}) - b_{j}(\epsilon^*_j) \big )~.$$
\end{enumerate}
\end{theorem}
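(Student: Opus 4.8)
The plan is to follow the classical monopolistic-screening argument (in the style of Maskin--Riley / Mussa--Rosen), the only genuinely new ingredients being the three-dimensional contract and the coupling through the adversary. The first step is a reduction: an honest buyer's utility $b_i(\epsilon_i)-p_i-\gamma s_i$ depends on $(p_i,s_i)$ only through the effective transfer $t_i := p_i+\gamma s_i$, so every IR and IC constraint among honest types is a constraint on $(\epsilon_i,t_i)$ alone, exactly as in a one-dimensional screening problem with payment $t_i$ and valuation $b_i$. Moreover, the seller's objective $(1-\rho)\sum_i q_i t_i - \rho r_A$, the steady-revenue constraint $p_i \ge (1-\phi)t_i$, and the adversary constraints $r_A \ge C(\epsilon_i)-p_i-s_i$ are all (weakly) improved by the only perturbation I will use: raising some of the prices $p_i$ (leaving $\epsilon_i,s_i$ fixed), which raises the corresponding $t_i$ by the same amount. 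Hence any such perturbation only tightens the honest IC/IR structure and leaves the adversarial part and the $SR$ constraints no worse. I would prove the items in the order 1, 3, 4, 2, since the closed forms and the redundancy are downstream.

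Item 1 (monotonicity). Adding $IC_{i+1,i}$ and $IC_{i,i+1}$ and cancelling the transfers gives $b_{i+1}(\epsilon^*_{i+1})-b_i(\epsilon^*_{i+1}) \ge b_{i+1}(\epsilon^*_i)-b_i(\epsilon^*_i)$, i.e. $g_i := b_{i+1}-b_i$ is weakly larger at $\epsilon^*_{i+1}$ than at $\epsilon^*_i$. Under strict ID, $g_i$ is strictly increasing and this forces $\epsilon^*_{i+1}\ge\epsilon^*_i$. Under the (non-strict) ID assumed here, $g_i$ is only nondecreasing, so if $\epsilon^*_{i+1}<\epsilon^*_i$ then $g_i$ is constant on $[\epsilon^*_{i+1},\epsilon^*_i]$, which forces both adjacent IC's tight and $t^*_{i+1}-t^*_i = b_i(\epsilon^*_{i+1})-b_i(\epsilon^*_i)\le 0$; one then argues, by a tie-breaking/ironing step, that there is an equally good solution that is monotone, so we may assume monotonicity. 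I expect this non-strict case to be the one genuinely delicate point of the proof.

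Items 3 and 4 (tightness of $IR_1$ and of $IC_{i+1,i}$). Both are ``push the transfers up'' arguments. If $IR_1$ were slack then, from $IC_{i,i-1}$ together with $b_{i-1}\le b_i$, we get $u^*_i \ge u^*_{i-1}\ge\cdots\ge u^*_1 > 0$, so every $IR_i$ is slack; raising every $p_i$ by a small $\delta>0$ then keeps all IC's (both sides fall by $\delta$), keeps all IR's and all $SR_i$, can only lower $r_A$, and increases revenue by $(1-\rho)\delta>0$, contradicting optimality. For $IC_{i+1,i}$, suppose it is slack. First, $u^*_{i+1} > b_{i+1}(\epsilon^*_i)-t^*_i \ge b_i(\epsilon^*_i)-t^*_i = u^*_i \ge 0$, hence $u^*_j \ge u^*_{i+1}>0$ for all $j\ge i+1$. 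Second, a short telescoping argument using monotonicity and ID shows that slackness of $IC_{i+1,i}$ forces every $IC_{j,k}$ with $k\le i < i+1\le j$ to be strictly slack. Now raise $p_j$ by $\delta>0$ for every $j\ge i+1$: the $IC_{j,k}$ with both indices $\ge i+1$ are untouched, those with $j\le i$ only get easier, the harder ones are exactly the $IC_{j,k}$ with $j\ge i+1,\ k\le i$ just shown to be strictly slack, and all IR, $SR$ and $r_A$ constraints remain satisfied; revenue rises by $(1-\rho)\delta\sum_{j\ge i+1}q_j>0$ (using $q_n>0$), a contradiction.

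Item 2 and the formulas. Given items 1, 3, 4, the rest are the standard local-to-global implications. For $j<i$, chaining the (now tight) $IC_{k,k-1}$, $k=j+1,\dots,i$, gives $t^*_i-t^*_j = \sum_k (b_k(\epsilon^*_k)-b_k(\epsilon^*_{k-1})) \le \sum_k (b_i(\epsilon^*_k)-b_i(\epsilon^*_{k-1})) = b_i(\epsilon^*_i)-b_i(\epsilon^*_j)$ by monotonicity and ID, which is $IC_{i,j}$; the symmetric chaining handles $j>i$. For the IR's, $u^*_i \ge u^*_{i-1}\ge\cdots\ge u^*_1 = 0$ as above, so $IR_i$ ($i>1$) holds; thus the dropped constraints are redundant. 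Finally, unrolling the tight recursion $t^*_1 = b_1(\epsilon^*_1)$ and $b_{i+1}(\epsilon^*_{i+1})-t^*_{i+1} = b_{i+1}(\epsilon^*_i)-t^*_i$ by induction yields $p^*_i+\gamma s^*_i = t^*_i = b_i(\epsilon^*_i) - \sum_{j=1}^{i-1}(b_{j+1}(\epsilon^*_j)-b_j(\epsilon^*_j))$, as claimed. The main obstacle, as noted, is the non-strict-ID monotonicity step; the remaining work is a careful but routine adaptation of textbook screening to the enlarged contract space and the adversarial objective.
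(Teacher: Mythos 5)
Your proposal is correct and follows essentially the same route as the paper: reduce to the effective transfer $t_i=p_i+\gamma s_i$, derive monotonicity from the paired adjacent IC constraints plus ID, and establish tightness/redundancy by price-raising perturbations, observing each time that raising prices (or withdrawing a contract) can only lower the adversary's utility and hence weakly improve the seller's adversarial revenue term and the $SR$ constraints. The one step you leave as a sketch --- the ``tie-breaking/ironing'' for non-strict ID --- is exactly where the paper gives the explicit argument (swap the higher type onto the other, equally preferred contract; this is revenue-equivalent or strictly better), and your preceding deductions (both adjacent ICs tight, $t^*_{i+1}\le t^*_i$) already set up that swap, so the gap is cosmetic.
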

\begin{proof}[Proof Sketch]
We first establish the monotonicity of noise levels at the optimal contracts using the (non-strict) ID condition of the benefit functions. 
Next, we show how to considerably refine the constraint set (point 2) and derive the price-benefit relations (points 3-4). These arguments are based on contradiction: had any of these constraints not been redundant/tight, the operator would have had room to improve her profit by modifying the contracts without violating the remaining IR and IC constraints of honest buyers. 
For the contradiction argument to carry through, we show that under appropriate modifications, the effect of changes in the adversarial types' behavior on the revenue is non-decreasing. 
\end{proof}

We note that for the non-adversarial case, the same results of the above theorem holds; this follows from prior work in contract theory~\cite{maskin1984monopoly}  (using a straightforward mapping that we present in the  appendix). Formally:
\begin{proposition} \label{noadvcontract}
Assuming that the functions $b_i$ satisfy ID, the optimal contracts in the non-adversarial setting have $s_i^* = 0$ and satisfy all conditions of Theorem~\ref{advcontractproperties} (with $s_i^* = 0$). 
\end{proposition}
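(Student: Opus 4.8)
The plan is to first eliminate the post-hoc fine variables and reduce the problem to a classical monopolistic screening problem, then invoke the known characterization of its solution. The key observation is that the objective $\sum_i q_i(p_i + \gamma s_i)$ and every constraint $\textit{IR}_i$ and $\textit{IC}_{i,j}$ depend on the pair $(p_i, s_i)$ only through the single quantity $w_i := p_i + \gamma s_i$ (since $u_i(p,\epsilon,s) = b_i(\epsilon) - p - \gamma s$). The only constraint that separates $p_i$ from $s_i$ is $\textit{SR}_i$, namely $p_i \ge (1-\phi)(p_i+\gamma s_i)$, which with $s_i = 0$ reduces to $\phi p_i \ge 0$ and is trivially satisfied. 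Hence, given any feasible tuple $(p_i,\epsilon_i,s_i)_{i\in\Theta}$, the tuple $(w_i,\epsilon_i,0)_{i\in\Theta}$ is feasible with the same objective value, so there is an optimal solution with $s_i^* = 0$ and we may restrict attention to such contracts.

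With $s_i = 0$ imposed, the non-adversarial program becomes
\[
\max_{(p_i,\epsilon_i)_{i\in\Theta}}\ \sum_{i=1}^n q_i p_i \quad\text{s.t.}\quad b_i(\epsilon_i)-p_i \ge 0,\ \ b_i(\epsilon_i)-p_i \ge b_i(\epsilon_j)-p_j\ \ \forall i,j,\ \ p_i \ge 0,\ \ \epsilon_i \in [0,1],
\]
which is precisely a finite-type monopolistic nonlinear-pricing problem: the ``quality'' is the privacy level $\epsilon_i$, the transfer is $p_i$, type $i$'s valuation is $b_i(\cdot)$, the type prior is $\{q_i\}$, and the seller's cost of providing quality is identically zero (adding less noise is free for her). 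Under the (non-strict) ID hypothesis on $\{b_i\}$ --- which is implied by, and weaker than, the single-crossing property used in~\cite{maskin1984monopoly} --- the classical analysis of this problem applies via the identity mapping (quality $\mapsto \epsilon_i$, transfer $\mapsto p_i$, valuation $\mapsto b_i$, cost $\mapsto 0$) and gives: monotonicity $\epsilon^*_{i+1} \ge \epsilon^*_i$; redundancy of $\textit{IR}_i$ for $i>1$ and of $\textit{IC}_{i,j}$ for $j \ne i-1$; tightness of $\textit{IR}_1$, i.e. $p_1^* = b_1(\epsilon_1^*)$; and tightness of each downward-adjacent $\textit{IC}_{i+1,i}$, i.e. $p_{i+1}^* = p_i^* + b_{i+1}(\epsilon_{i+1}^*) - b_{i+1}(\epsilon_i^*)$, which together with the binding $\textit{IR}_1$ rearranges into the stated closed form for $p_i^*$. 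With $s_i^* = 0$ these are exactly the four conclusions of Theorem~\ref{advcontractproperties}.

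This argument is essentially bookkeeping, so there is no genuinely hard step. The two points that need a little care are: (i) the initial reduction, where one must note that collapsing $(p_i,s_i)$ to $(p_i+\gamma s_i,0)$ preserves $\textit{SR}_i$ precisely because $\textit{SR}_i$ bounds $s_i$ from above rather than below; and (ii), if one prefers a self-contained derivation rather than a citation to~\cite{maskin1984monopoly}, checking that the classical single-crossing arguments survive under the weaker non-strict ID hypothesis --- but this is already subsumed, since Proposition~\ref{noadvcontract} can also be read off as the $\rho = 0$ specialization of Theorem~\ref{advcontractproperties} (whose proof handles the strictly more general adversarial case under exactly this hypothesis), after which only the $s_i^* = 0$ reduction of step (i) remains to be added.
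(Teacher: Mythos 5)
Your proposal is correct and follows essentially the same route as the paper: first collapse $(p_i,s_i)$ into $p_i+\gamma s_i$ with $s_i=0$ (noting that $\textit{SR}_i$ is the only constraint separating the two and becomes trivial), then observe that the residual problem is the classical screening problem of~\cite{maskin1984monopoly}. Your added remarks --- that the reduction works because $\textit{SR}_i$ only bounds $s_i$ from above, and that the conclusions can alternatively be read off from Theorem~\ref{advcontractproperties} at $\rho=0$ to cover the non-strict ID hypothesis --- are consistent with, and slightly more careful than, the paper's two-line argument.
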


In particular, the relation between prices, fines, and benefit functions (points 3-4), provides an easy visual representation of the contracts as shown in Figure~\ref{fig:b-C-P} for the non-adversarial setting (that is, with $s_i =0$). We call this curve the \emph{price-contract} curve $\mathcal{P}(\epsilon)$, which is a curve on the $\epsilon$ (on x-axis), $p$ (on y-axis) plane, and connects the  non-adversarial contract points  $(\epsilon_i^*, p_i^*)$ for all $i$. From Proposition~\ref{noadvcontract}, we get $p_i^* - p_{i-1}^* = b_i(\epsilon_i^*) - b_i(\epsilon_{i-1}^*)$; thus, the segment of the curve $\mathcal{P}$ that is between $\epsilon^*_{i-1}$ and $\epsilon^*_i$ is parallel to $b_i(\cdot)$. Thus, $\mathcal{P}$ is continuous and piece-wise concave. 

Theorem~\ref{advcontractproperties}'s characterization greatly simplifies the optimization problem to compute the optimal contracts by removing several of the constraints (points 1-2).
The result also shows that the optimization problem for computing optimal contracts in the presence of adversaries has only additional adversarial constraints and the same price-benefit relations (points 3-4) as that without adversaries. Despite these similarities, the presence of adversaries changes the seller's objective function, leading to a different set of contracts than the non-adversarial setting. Proposition~\ref{noadvcontract} further implies that the variables $s_i$ can be dropped in the optimization problem for the non-adversarial case, yet these variable remain a key design choice in the adversarial setting.

\medskip
\noindent\textbf{Price of Adversary}:
In order to quantify the effects of the adversary's presence on the seller's revenue, we introduce the following notion:
\begin{definition}[Price of Adversary]
Let $R^*$ and $R^*_A$ denote the seller's maximum revenue in non-adversarial and adversarial settings, respectively. 
Then, the price of adversary ($PoAdv$) is defined as:
$$
PoAdv = (1-\rho)\frac{R^*}{R^*_A}
$$
\end{definition}
Clearly $PoAdv$ is $\geq 1$, with equality when the adversary does not choose any contract, so that $R^*_A = (1-\rho)R^*$. Our first finding is that $PoAdv$ is unbounded in the worst case.

\begin{lemma}\label{l:unboundedPoA}
$PoAdv$ is unbounded in the worst case.
\end{lemma}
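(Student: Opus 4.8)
The plan is to exhibit a family of instances where the non-adversarial revenue $R^*$ stays bounded below by a positive constant while the adversarial revenue $R^*_A$ can be driven arbitrarily close to zero (or even negative, though we will keep it positive and tending to $0$), so that the ratio $(1-\rho) R^*/R^*_A \to \infty$. The simplest setting suffices: take a single honest type ($n=1$) with some fixed concave benefit function $b_1$ with $b_1(1) > 0$, a fixed adversary fraction $\rho \in (0,1)$, and then choose the adversary cost function $C(\cdot)$ to be extremely steep — e.g. $C(\epsilon) = M(\exp(\epsilon)-1)$ for a large parameter $M$, which is convex, increasing, with $C(0)=0$ as required. In the non-adversarial problem the seller simply sells the full-privacy-budget contract $(\epsilon_1^* = 1,\, p_1^* = b_1(1),\, s_1^* = 0)$ with $IR_1$ tight, yielding $R^* = b_1(1)$, a constant independent of $M$.

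Next I would analyze the adversarial problem for large $M$. With only one honest contract on offer, the adversary either opts out ($Z=0$) or buys the honest contract ($Z=1$); by Theorem~\ref{advcontractproperties} (point 3) the honest contract has $p_1 + \gamma s_1 = b_1(\epsilon_1)$, so the seller's revenue is $(1-\rho) b_1(\epsilon_1) + \rho\min(0,\, p_1 + s_1 - C(\epsilon_1))$. The key observation is that to keep the adversary from inflicting the cost $C(\epsilon_1)$, the seller is forced to either shrink $\epsilon_1$ toward $0$ (killing honest revenue, since $b_1(\epsilon_1)\to 0$) or push the fine $s_1$ up high enough that $p_1 + s_1 \geq C(\epsilon_1)$ — but the $SR_1$ constraint $p_1 \geq (1-\phi)(p_1+\gamma s_1) = (1-\phi) b_1(\epsilon_1)$ together with $p_1 + \gamma s_1 = b_1(\epsilon_1)$ caps $\gamma s_1$ at $\phi\, b_1(\epsilon_1)$, hence $s_1 \leq \phi\, b_1(\epsilon_1)/\gamma$, a bounded quantity. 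Since $C(\epsilon_1) = M(\exp(\epsilon_1)-1)$ grows without bound in $M$ for any fixed $\epsilon_1$ bounded away from $0$, no bounded choice of $s_1$ can cover $C(\epsilon_1)$ once $M$ is large. So for large $M$ the seller's best response is to take $\epsilon_1$ small; optimizing the tradeoff $(1-\rho) b_1(\epsilon_1) - \rho(C(\epsilon_1) - p_1 - s_1)$, I would show the optimal $\epsilon_1 \to 0$ as $M\to\infty$, whence $R^*_A \to 0^+$ (keeping $\epsilon_1$ just large enough that the net stays nonnegative, e.g. $\epsilon_1 \sim c/M$). Therefore $PoAdv = (1-\rho) R^*/R^*_A = (1-\rho) b_1(1)/R^*_A \to \infty$.

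I would state the conclusion cleanly: for every $B$ there is an instance (fixed $n=1$, $b_1$, $\rho$, $\phi$, $\gamma$, and $C = C_M$ with $M$ large enough) with $PoAdv > B$; hence $PoAdv$ is unbounded in the worst case. The main obstacle — and the step deserving the most care — is verifying that the seller genuinely cannot escape the loss: one must rule out the option of offering \emph{no} contract at all (that gives revenue $0$, consistent with $R^*_A \to 0$, so it does not help), and one must confirm via the $SR_1$ and $IR_1$ constraints that the post-hoc fine $s_1$ is truly bounded by a $C$-independent quantity, so that raising $s_1$ cannot neutralize a sufficiently steep $C$. Once that boundedness is pinned down, the divergence of the ratio is immediate from $C_M(\epsilon)\to\infty$ pointwise for $\epsilon>0$ and the continuity of $b_1$ at $0$.
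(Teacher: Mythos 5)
Your proposal is correct and follows essentially the same route as the paper's proof: both construct an instance with a steep exponential adversary cost $C(\epsilon)=M(\exp(\epsilon)-1)$ and use the price identity $p+\gamma s=b(\epsilon)$ from Theorem~\ref{advcontractproperties} (together with nonnegativity/$SR$) to cap the fine $s$ by a $C$-independent multiple of the benefit, so the seller is forced toward $\epsilon\to 0$ and $R^*_A\to 0$ while $R^*$ stays bounded away from zero. The only differences are cosmetic: you use a single honest type and a limiting family $M\to\infty$, whereas the paper fixes two types and an explicit $K$ depending on $\rho,\gamma$ to obtain $R^*_A=0$ exactly.
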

\begin{proof}[Proof Sketch]
We prove this by construction with two types of legitimate users $H$ and $L$.
The benefit function are $b_L(\epsilon)=\log(1+\epsilon)$ for the lower type $L$ and $b_H(\epsilon)=2 \log(1+\epsilon)$ for the higher type $H$. 
The function for the adversary is given by $C(\epsilon)=(10/\rho + 2\frac{1- \gamma}{\rho \gamma} (\exp(\epsilon) -1)$.
We show that $R^*_A=0$, leading to an unbounded $PoAdv$.
\end{proof}

\section{Approximation Algorithm}
In this section, we present an approach that solves for the adversarial contracting problem approximately, given a solution for the non-adversarial case. We do so since solving the non-adversarial scenario is simpler: by Proposition~\ref{noadvcontract}, \rev{the non-adversarial case has both fewer variables ($s_i=0, \forall i$) and fewer constraints (no adversary contract choice constraint).} 
 Our proposed  algorithm also reveals a subtle relation between the adversarial and non-adversarial settings.

Since by Lemma~\ref{l:unboundedPoA} we know that $PoAdv$ is unbounded in the worst case, we limit our analysis to a large class of adversary's benefit functions $C(\cdot)$ which imposes mild and natural restriction on these functions. We call these the \emph{well-behaved} $C$'s, and define them as follows. Recall that $\mathcal{P}$ denotes the non-adversarial price-contract curve  (see Figure~\ref{fig:b-C-P}).
\begin{itemize}
\item (High $ C$) $ C$ intersects $\mathcal{P}$ once at the origin and then lies above $\mathcal{P}$ for $\epsilon > 0$. 
\item (Low $ C$) $ C$ intersects $\mathcal{P}$ once at the origin and then lies below $\mathcal{P}$ for $\epsilon > 0$.
\item (Intermediate $ C$) $ C$ intersects $\mathcal{P}$ multiple times. Let $\epsilon_M\in(0,1)$ be the access level at the last intersection point. 
We denote $\Delta:=\max_{\epsilon<\epsilon_M}\{C(\epsilon) - \mathcal{P}(\epsilon)\}$. 
\end{itemize}

The above classes comprise several types of adversaries. High $C$s (low $C$s) represent powerful (weak) adversaries, who can (can not afford to) impose a high cost on the revenue; this class includes functions $C(\epsilon)\geq b_n(\epsilon)$ ($C(\epsilon)\leq b_1(\epsilon)$) as a subset. Intermediate $C$s on the other hand represent adversaries who can purchase (some of) the contracts offered through \emph{non-adversarial} contract design. Within this class, $\Delta$ is an upper bound on the adversary's payoff  from purchasing contracts with $\epsilon^*_i< \epsilon_M$.  As $C$ lies above $\mathcal{P}$ after $\epsilon_M$, we have $C(\epsilon^*_i) \geq p_i^*$ for all $\epsilon^*_i\geq \epsilon_M$, which means that the adversary can afford all contracts with $\epsilon^*_i\geq \epsilon_M$. Figure~\ref{fig:b-C-P} illustrates an intermediate $C$. Next, we present our approximation technique. We start with a definition.

\begin{definition} \label{def:slack}
We call the non-adversarial contract $(p, l, 0)$ a \emph{$\delta$-slack $\lambda$-priced} contract, ($\delta, \lambda \geq 0$), if there exists $s \geq 0$ such that the contract $(p-\gamma s,\epsilon,s)$ satisfies:
\begin{itemize}
    \item $C(\epsilon) - p - s \leq \delta$, i.e., adversary's gain is bounded by $\delta$.
    \item $p - \gamma s \geq \lambda > 0$, i.e., the contract's price is at least  $\lambda$.
    \item $p - \gamma s \geq (1 - \phi)p$, SR constraint is satisfied
\end{itemize}
Constructively, $s$ whenever it exists, should be chosen to have the least possible value.
\end{definition}

Using the above definition, our approximation technique is tailored towards the three categories of functions $C$ as shown in Algorithm~\ref{algorig}. This algorithm takes the set of non-adversarial contracts as input, and either successfully returns a new set of contracts by modifying this input, or prescribes solving the adversarial contract design problem from scratch. For the  High $C$ case, the algorithm finds $0$-slack contracts with a positive price (line 4, $0$-slack ensures the  adversary will not choose the new contract). If one is found, the contract generating the highest revenue among such contracts is offered to all users (line 10). For Low $C$, the adversary does not choose any contract, hence it is optimal to retain the non-adversarial contracts as is (line 13). For Intermediate $C$, the function $InterCApp$ presented in Algorithm~\ref{alg1} is invoked (line 15).
\begin{algorithm}[t]
\caption{Approx. Algorithm} \label{algorig} 
  \DontPrintSemicolon
  \SetAlgoNoEnd
  \SetKwComment{Comment}{$\triangleright$\ }{}
  \KwIn{Non-adv. contracts $(p_1^*, \epsilon_1^*,0), \ldots, (p_n^*, \epsilon_n^*,0)$}
\KwOut{An array of $contracts$ or solve adv. case}
$contracts \gets (p_1^*, \epsilon_1^*,0), \ldots, (p_n^*, \epsilon_n^*,0)$

\Switch{$C$}{
    \Case{High $C$}{
    $M = \{k~|~(p^*_i,\epsilon^*_i,0)$ is $0$-slack $\lambda$-priced for some $\lambda \geq 0\}$\;
    \If{$M$ is empty}{
    \Return{solve adv. case}
    }
    $j = \arg\!\max_{k \in M} p^*_k$\;
    $s^*_{j} \gets$ $s$ that makes $(p_{j}^*, \epsilon_{j}^*,0)$ $0$-slack $\lambda$-priced\;
    \For{$i\gets 1$ \KwTo $K$}{
    $contracts(i) = (p_{j}^* - \gamma s^*_j, \epsilon_j^*,s^*_j)$
    }
    \Return{$contracts$}
    }
    \Case{Low $C$}{
        \Return{$contracts$}
    }
    \Case{Intermediate $C$}{
        \Return{ Inter$C$App$((p_1^*, \epsilon_1^*,0), \ldots, (p_n^*, \epsilon_n^*,0))$
        }
    }
}
\Return{solve adv. case}
\end{algorithm}

\begin{algorithm}[t]
\caption{$InterCApp$} \label{alg1}
  \DontPrintSemicolon
  \SetAlgoNoEnd
  \SetKwComment{Comment}{$\triangleright$\ }{}
  \KwIn{Non-adv. contracts $(p_1^*, \epsilon_1^*,0), \ldots, (p_n^*, \epsilon_n^*,0)$}
\KwOut{An array of $contracts$}
$K \gets $ highest $i$ such that $ \epsilon^*_{i} \leq \epsilon_M$\;
$E_{\geq K} = \{k~|~k \geq K$ and $(p_{k}^*, \epsilon_{k}^*,0)$ is $\Delta$-slack $p^*_K$-priced$\}$ \Comment*[r]{$E_{\geq K}$ not empty as $K \in E_{\geq K}$} 
 \For{$i\gets K+1$ \KwTo $n$}{
    $\mathsf{safe}(i) \gets \arg\!\max_{k \in E_{\geq K}} \{b_i(\epsilon^*_k) - p^*_k\}$\;
    $s^*_{\mathsf{safe}(i)} \gets$ $s$ that makes $(p_{\mathsf{safe}(i)}^*, \epsilon_{\mathsf{safe}(i)}^*,0)$ $\Delta$-slack $p^*_K$-priced\;
    $contracts(i) = (p_{\mathsf{safe}(i)}^* - \gamma s^*_{\mathsf{safe}(i)}, \epsilon_{\mathsf{safe}(i)}^*,s^*_{\mathsf{safe}(i)})$
    }
    \For{$i\gets 1$ \KwTo $K$}{
    $contracts(i) = (p_{i}^*, \epsilon_i^*,0)$
    }
    $\widehat{R}^*_{K} = \sum_{i=1}^{K} q_i p^*_i  + \sum_{i=K+1}^n q_i p_{\mathsf{safe}(i)}^*$\;
    $R^* = \sum_{i=1}^{n} q_i p^*_i $\;
    $\beta = \max\big(\max_{i \leq K} \{C(\epsilon^*_i) - p^*_i\}, \max_{i > K} \{C(\epsilon^*_{\mathsf{safe}(i)}) - p^*_{\mathsf{safe}(i)} - s^*_{\mathsf{safe}(i)}\} \big) $\;
    $\alpha = \max_i \{C(\epsilon^*_i) - p^*_i\}$\;
    
    \If{$(1- \rho)R^* - \rho\alpha  > (1- \rho)\widehat{R}^*_{K} - \rho\beta$}{
    \Return{$(p_1^*, \epsilon_1^*,0), \ldots, (p_n^*, \epsilon_n^*,0)$}
    }
  \Return{$contracts$}
\end{algorithm}

In Algorithm~\ref{alg1}, first a set of $\Delta$-slack $p_K^*$-priced contracts is found among contracts above and including that of type $K$ (line 2). The best contract with index $\mathsf{safe}(i)$ among these is found for each user $i > K$ (line 4). New contracts $(p_{\mathsf{safe}(i)}^* - \gamma s_{\mathsf{safe}(i)}, l_{\mathsf{safe}(i)}^*, s_{\mathsf{safe}(i)})$ are constructed for types $i > K$ (line 6), and all the non-adversarial contracts for types $K$ and below are retained as is (line 8). The revenue from honest buyers for the new contracts is found on line 9, and for the non-adversarial contracts on line 10. $\beta$ is the utility for the adversarial type in choosing the best new contract (line 11) and $\alpha$ is the same adversary utility in choosing from the non-adversarial contract set (line 12). Line 13-15 compares the revenue in the adversarial setting from the non-adversarial contracts and the new contract set, and returns the contract set that leads to better revenue for the seller. 

We next prove that the contracts output by Algorithm~\ref{algorig} are valid.
First, we present a lemma on the ordering of honest buyers' preferences over the contracts, which will later be used for the validity proof. 
\begin{lemma} \label{simple}
Given optimal non-adversarial contracts $(p_1^*, \epsilon_1^*,0), \ldots, (p_n^*, \epsilon_n^*, 0)$, a type $i$ user with $i > j$ prefers contract $(p^*_j, \epsilon^*_j, 0)$ over   $(p^*_k, \epsilon^*_k, 0)$ for $j > k$. 
\end{lemma}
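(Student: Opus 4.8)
The plan is to reduce the claim to a single application of the increasing differences (ID) property, using the incentive compatibility constraints that are known to hold at the optimal non-adversarial contracts. Concretely, for $i > j > k$ we must show $b_i(\epsilon^*_j) - p^*_j \ge b_i(\epsilon^*_k) - p^*_k$, i.e. that the ``price gap'' $p^*_j - p^*_k$ is dominated by the benefit gap $b_i(\epsilon^*_j) - b_i(\epsilon^*_k)$ for the higher type $i$. The key observation is that the lower type $j$ already (weakly) prefers its own contract over contract $k$, which bounds the price gap by $b_j(\epsilon^*_j) - b_j(\epsilon^*_k)$, and ID then lets us pass from type $j$ to type $i$.

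\textbf{Steps.} First I would invoke Proposition~\ref{noadvcontract} (and Theorem~\ref{advcontractproperties}, point 1, transported to the non-adversarial setting): the optimal contracts have $s^*_i = 0$ and satisfy monotonicity $\epsilon^*_1 \le \cdots \le \epsilon^*_n$; in particular $\epsilon^*_j \ge \epsilon^*_k$ since $j > k$. Second, I would note that although $\textit{IC}_{j,k}$ is labelled ``redundant'' in Theorem~\ref{advcontractproperties}, redundancy means it is \emph{implied} by the retained constraints, hence it still holds at the optimum: $b_j(\epsilon^*_j) - p^*_j \ge b_j(\epsilon^*_k) - p^*_k$, which rearranges to
\[
p^*_j - p^*_k \;\le\; b_j(\epsilon^*_j) - b_j(\epsilon^*_k)~.
\]
Third, since $\epsilon^*_j \ge \epsilon^*_k$ and $i \ge j$, the ID property gives $b_i(\epsilon^*_j) - b_i(\epsilon^*_k) \ge b_j(\epsilon^*_j) - b_j(\epsilon^*_k)$ (both sides are zero if $\epsilon^*_j = \epsilon^*_k$, so the inequality is unconditional). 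Chaining these two inequalities yields
\[
\big(b_i(\epsilon^*_j) - p^*_j\big) - \big(b_i(\epsilon^*_k) - p^*_k\big)
= \big(b_i(\epsilon^*_j) - b_i(\epsilon^*_k)\big) - \big(p^*_j - p^*_k\big) \ge 0~,
\]
which is exactly the assertion that type $i$ (weakly) prefers $(p^*_j,\epsilon^*_j,0)$ to $(p^*_k,\epsilon^*_k,0)$.

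\textbf{Main obstacle.} There is no serious technical difficulty here; the only points needing care are bookkeeping ones: (i) justifying that $\textit{IC}_{j,k}$ for the non-adjacent pair $(j,k)$ is available at the optimum despite being called redundant — this is immediate since redundant constraints are satisfied constraints; and (ii) handling the degenerate case $\epsilon^*_j = \epsilon^*_k$, where the ID inequality is trivial but must still be stated, and where the bound on the price gap forces $p^*_j \le p^*_k$. Both are handled by the uniform chaining argument above, so the proof is short once the right two ingredients (the lower type's IC constraint and ID) are identified.
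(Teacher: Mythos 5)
Your proof is correct, and it reaches the conclusion by a slightly more direct route than the paper. The paper's proof uses the \emph{tightness} of the adjacent downward constraints (point 4 of Theorem~\ref{advcontractproperties}): from $b_j(\epsilon^*_j) - b_j(\epsilon^*_{j-1}) = p^*_j - p^*_{j-1}$ and ID it deduces that type $i>j$ prefers contract $j$ to $j-1$, and then chains these adjacent preferences inductively down to $k$. You instead invoke the single non-adjacent constraint $\textit{IC}_{j,k}$ as an inequality (valid because the optimal contracts are feasible, whether or not that constraint is listed as redundant), and apply ID once, together with the monotonicity $\epsilon^*_j \ge \epsilon^*_k$, to lift type $j$'s weak preference to type $i$. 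The two arguments rest on the same two ingredients (a lower type's IC constraint plus increasing differences), but yours needs neither the tightness of $\textit{IC}_{m,m-1}$ nor the induction, so it is marginally more robust; the paper's version, in exchange, makes explicit the adjacent-step structure that is reused elsewhere in the analysis. Your handling of the degenerate case $\epsilon^*_j = \epsilon^*_k$ is also fine.
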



The validity of the Algorithm~\ref{algorig}'s output is as follows: 
\begin{lemma}
For Low or Intermediate $C$s, Algorithm~\ref{algorig}'s output contracts satisfy the IR and IC conditions for all honest buyers. If Algorithm~\ref{algorig} outputs a set of contracts for a High $C$ adversary, then at least one honest buyer buys the contract.
\end{lemma}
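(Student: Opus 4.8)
The plan is to go through the three branches of Algorithm~\ref{algorig} in turn. Two of them are essentially immediate: in the Low~$C$ branch the output is the optimal non-adversarial contract set, and in the Intermediate~$C$ branch whenever $InterCApp$ returns the original non-adversarial contracts, the output is again that set. These contracts are feasible for the non-adversarial program, so by Proposition~\ref{noadvcontract} they already satisfy $IR_i$ and $IC_{i,j}$ for every honest type (with $s_i^*=0$, and the $SR_i$ constraints as well). Hence the real content lies in (a) the modified array returned by $InterCApp$, and (b) the single pooled contract returned in the High~$C$ branch.

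The key step I would establish first is a \emph{price-shift invariance} for honest buyers: the utility of an honest type-$\ell$ buyer from a contract of the form $(p-\gamma s,\epsilon,s)$ is $b_\ell(\epsilon)-(p-\gamma s)-\gamma s=b_\ell(\epsilon)-p$, i.e.\ exactly his utility from the non-adversarial contract $(p,\epsilon,0)$. Thus, for honest buyers only, the array produced by $InterCApp$ is utility-equivalent to the ``virtual'' non-adversarial assignment that gives type $i\le K$ the contract $(p_i^*,\epsilon_i^*,0)$ and type $i>K$ the contract $(p_{\mathsf{safe}(i)}^*,\epsilon_{\mathsf{safe}(i)}^*,0)$, with $\mathsf{safe}(i)\in E_{\ge K}\subseteq\{K,\dots,n\}$. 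So it suffices to verify $IR$ and $IC$ for this virtual assignment, using only the non-adversarial $IR$/$IC$ relations, the defining argmax property of $\mathsf{safe}$, and Lemma~\ref{simple}; I would also note in passing that the returned parameters are legal (the $\epsilon$-levels are unchanged non-adversarial ones, $s\ge 0$) and that $SR$ holds by Definition~\ref{def:slack}.

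For $IR$: types $i\le K$ keep their own non-adversarial contract, so $IR_i$ is inherited verbatim. For type $i>K$, since $K\in E_{\ge K}$ the argmax gives $b_i(\epsilon_{\mathsf{safe}(i)}^*)-p_{\mathsf{safe}(i)}^*\ge b_i(\epsilon_K^*)-p_K^*\ge b_K(\epsilon_K^*)-p_K^*\ge 0$, where the middle inequality uses $b_K\le b_i$ pointwise (as $i>K$) and the last is $IR_K$ of the non-adversarial contracts. For $IC$ I would split on the position of the two type indices relative to $K$. If both $i,\ell\le K$, it is just $IC_{i,\ell}$ of the non-adversarial solution. If $i\le K<\ell$, the competing contract has parameters $(p_{\mathsf{safe}(\ell)}^*,\epsilon_{\mathsf{safe}(\ell)}^*)$ with $\mathsf{safe}(\ell)\ge K\ge i$, so the required inequality is $IC_{i,\mathsf{safe}(\ell)}$. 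If both $i,\ell>K$, applying the argmax definition of $\mathsf{safe}(i)$ to the feasible competitor $\mathsf{safe}(\ell)\in E_{\ge K}$ yields the inequality at once. The remaining case $i>K\ge\ell$ is the one needing care: here the argmax gives $b_i(\epsilon_{\mathsf{safe}(i)}^*)-p_{\mathsf{safe}(i)}^*\ge b_i(\epsilon_K^*)-p_K^*$, and then Lemma~\ref{simple} (applied with $i>K>\ell$, or trivially if $\ell=K$) gives $b_i(\epsilon_K^*)-p_K^*\ge b_i(\epsilon_\ell^*)-p_\ell^*$; chaining the two gives $IC_{i,\ell}$.

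For the High~$C$ branch, Algorithm~\ref{algorig} returns a contract set only when $M\ne\emptyset$, in which case $j=\arg\max_{k\in M}p_k^*$ is a genuine type index and all users are offered the single contract $(p_j^*-\gamma s_j^*,\epsilon_j^*,s_j^*)$. Since only one contract is offered, there are no $IC$ constraints among honest buyers to check, and by the price-shift invariance the utility of an honest type-$j$ buyer from that contract is $b_j(\epsilon_j^*)-p_j^*\ge 0$ by $IR_j$ of the non-adversarial contracts; hence the fraction $q_j$ of type-$j$ buyers satisfies $IR$ and purchases it, which is exactly the claim. I expect the price-shift reduction to be the conceptual crux — once it is in place everything becomes bookkeeping — and the mixed $IC$ case $i>K\ge\ell$, which forces an appeal to Lemma~\ref{simple} rather than to a raw non-adversarial $IC$ inequality, to be the only genuinely fiddly verification.
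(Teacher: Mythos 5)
Your proposal is correct and follows essentially the same route as the paper's proof: the ``price-shift invariance'' is exactly the paper's observation that the modified contracts have effective price $p_{\mathsf{safe}(i)}^*-\gamma s_{\mathsf{safe}(i)}^*+\gamma s_{\mathsf{safe}(i)}^*=p_{\mathsf{safe}(i)}^*$, and your four-way IC case split (including the use of Lemma~\ref{simple} for $i>K\ge\ell$) and the IR chain through $K\in E_{\ge K}$ match the paper's argument step for step. Your writeup is merely a slightly more explicit enumeration of the same cases.
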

\begin{proof}
For High $C$, there is one contract offered to all users, so the IC constraints are trivially satisfied. Also, for user $j$ the contract offered satisfies IR, since from optimality of the non-adversarial contracts we get $b_j(\epsilon^*_j) - p^*_j \geq 0$.
For Low $C$, the proof is immediate from optimality of the non-adversarial contracts.
For Intermediate $C$, if the non-adversarial contracts are returned by Algorithm~\ref{alg1}.  then the claim again holds trivially. Otherwise, if new contracts are returned, first, observe that the contract $(p^*_K, \epsilon^*_K, 0)$ is $\Delta$-slack $p^*_K$-priced (follows from Def.~\ref{def:slack} and definition of $K$, $\Delta$). Thus, $E_{\geq K}$ is not empty as $K \in E_{\geq K}$.
Also, note that for users $i > K$, the offered modified contracts (line 6) still has the effective price $p_{\mathsf{safe}(i)}^* - \gamma s_{\mathsf{safe}(i)}^* + \gamma s_{\mathsf{safe}(i)}^* = p_{\mathsf{safe}(i)}^*$ which is the  same as the non-adversarial contract.

We first start by analyzing users $i > K$. All users $j > K$ are offered modified contracts (line 6) among those indexed by $E_{\geq K}$ (loop on line 3). By definition of $\mathbf{safe}(i)$, $b_i(\epsilon^*_{\mathsf{safe}(i)}) - p_{\mathsf{safe}(i)}^* \geq b_i(\epsilon^*_{k}) - p_{k}^*$ for all $k \in E_{\geq K}$. Thus, $i$ prefers his contract over any other offered to any $j > K$. 
Next, by definition of $\mathsf{safe}(i)$, $b_i(\epsilon^*_{\mathsf{safe}(i)}) - p_{\mathsf{safe}(i)}^* \geq b_i(\epsilon^*_{K}) - p_{K}^*$, and then by Lemma~\ref{simple} and $i > K$, $b_i(\epsilon^*_{K}) - p_{K}^* \geq b_i(\epsilon^*_j) - p^*_j$ for all $j \leq K$. Thus, $b_i(\epsilon^*_{\mathsf{safe}(i)}) - p_{\mathsf{safe}(i)}^* \geq b_i(\epsilon^*_{j}) - p_{j}^*$ for all $j \leq K$. For IR, first by ID we have $b_i(\epsilon^*_{K}) \geq b_K(\epsilon^*_{K})$, hence $b_i(\epsilon^*_{K}) - p_{K}^* \geq b_K(\epsilon^*_{K}) - p_{K}^* \geq 0$, where the $\geq 0$ is due to optimality of the non-adversarial contracts. Finally, we just proved that $b_i(\epsilon^*_{\mathsf{safe}(i)}) - p_{\mathsf{safe}(i)}^* \geq b_i(\epsilon^*_{K}) - p_{K}^*$, thus, $b_i(\epsilon^*_{\mathsf{safe}(i)}) - p_{\mathsf{safe}(i)}^* \geq 0$.

The users $i \leq K$ are offered the non-adversarial contracts, thus, $b_i(\epsilon^*_{i}) - p_{i}^* \geq b_i(\epsilon^*_{j}) - p_{j}^*$ for all $j \neq i$. Since the modified contracts (line 6) still have an effective price same as the non-adversarial contract, any user $i \leq K$ still prefers his contract to the modified ones. The IR constraint is satisfied as the non-adversarial contracts were optimal.
\end{proof}

\begin{figure*}[t]
\begin{minipage}[t]{0.24\textwidth}
\includegraphics[width=0.99\linewidth,keepaspectratio=true]{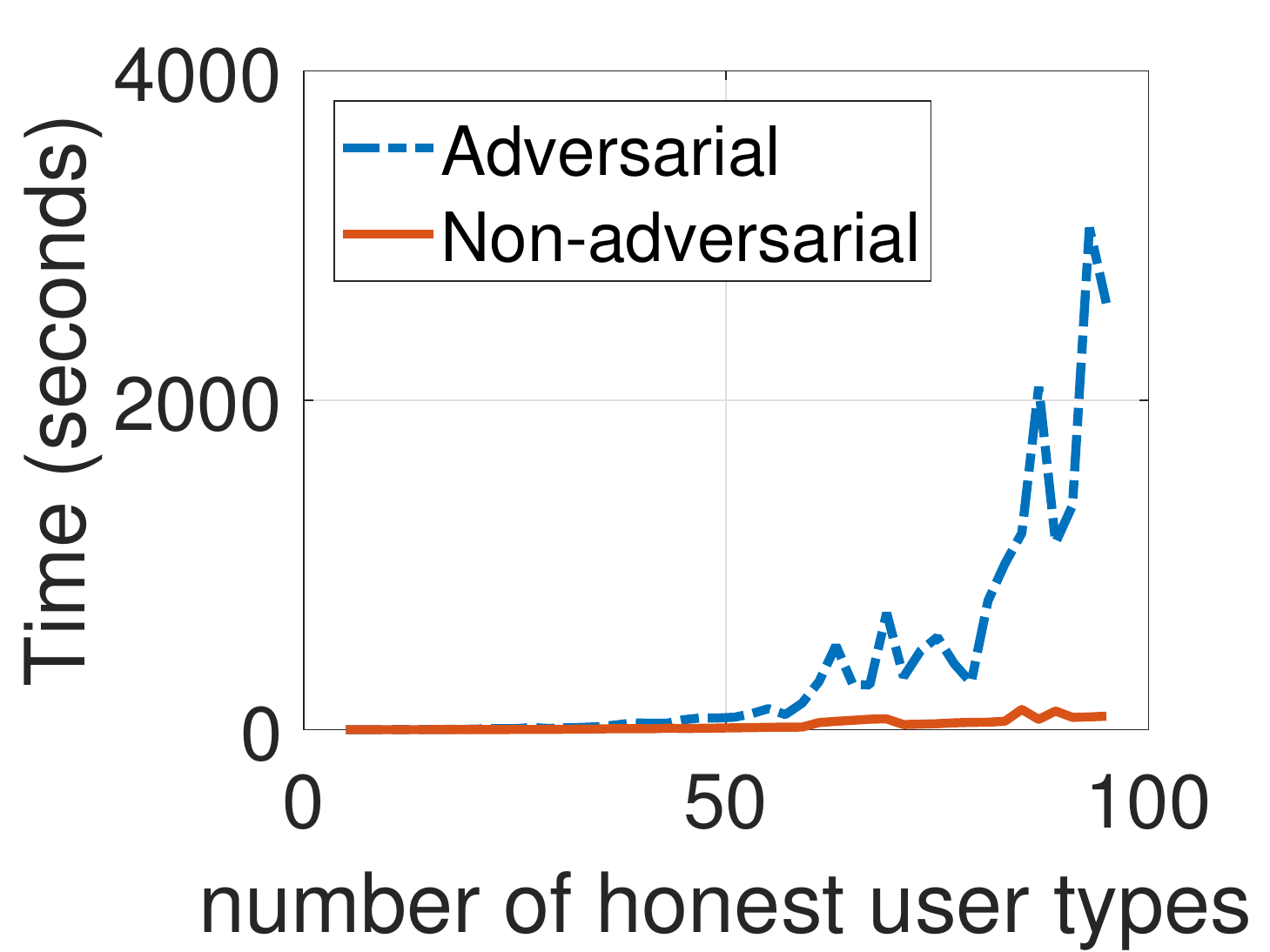}
\caption{Runtime comparison}
\label{fig:runtime}
\end{minipage}\hspace*{\fill}%
\begin{minipage}[t]{0.24\textwidth}
\includegraphics[width=0.99\linewidth,keepaspectratio=true]{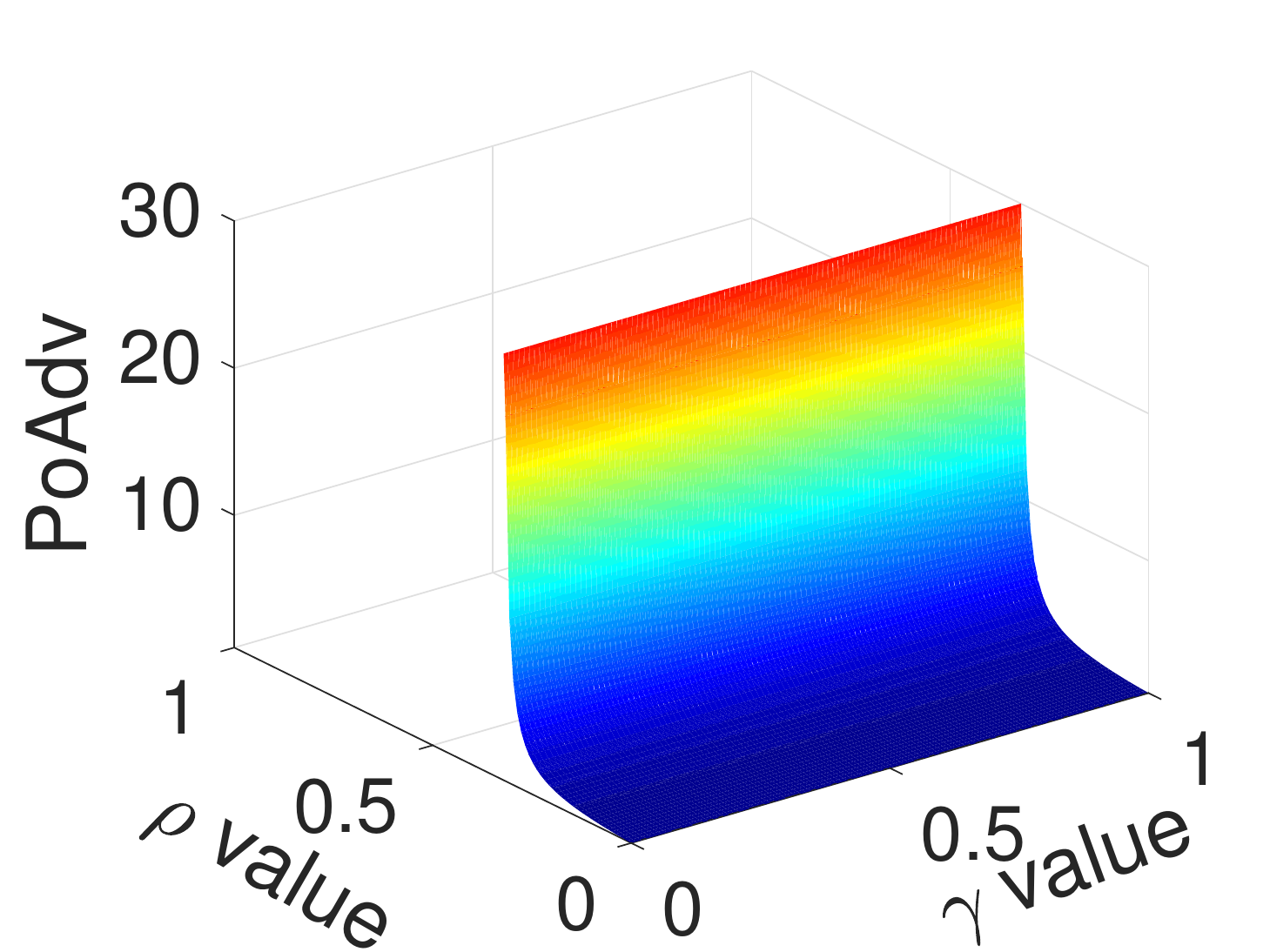}
\caption{$PoAdv$ for non-adversarial contracts}
\label{fig:noplan}
\end{minipage}\hspace*{\fill}%
\begin{minipage}[t]{0.24\textwidth}
\includegraphics[width=0.99\linewidth,keepaspectratio=true]{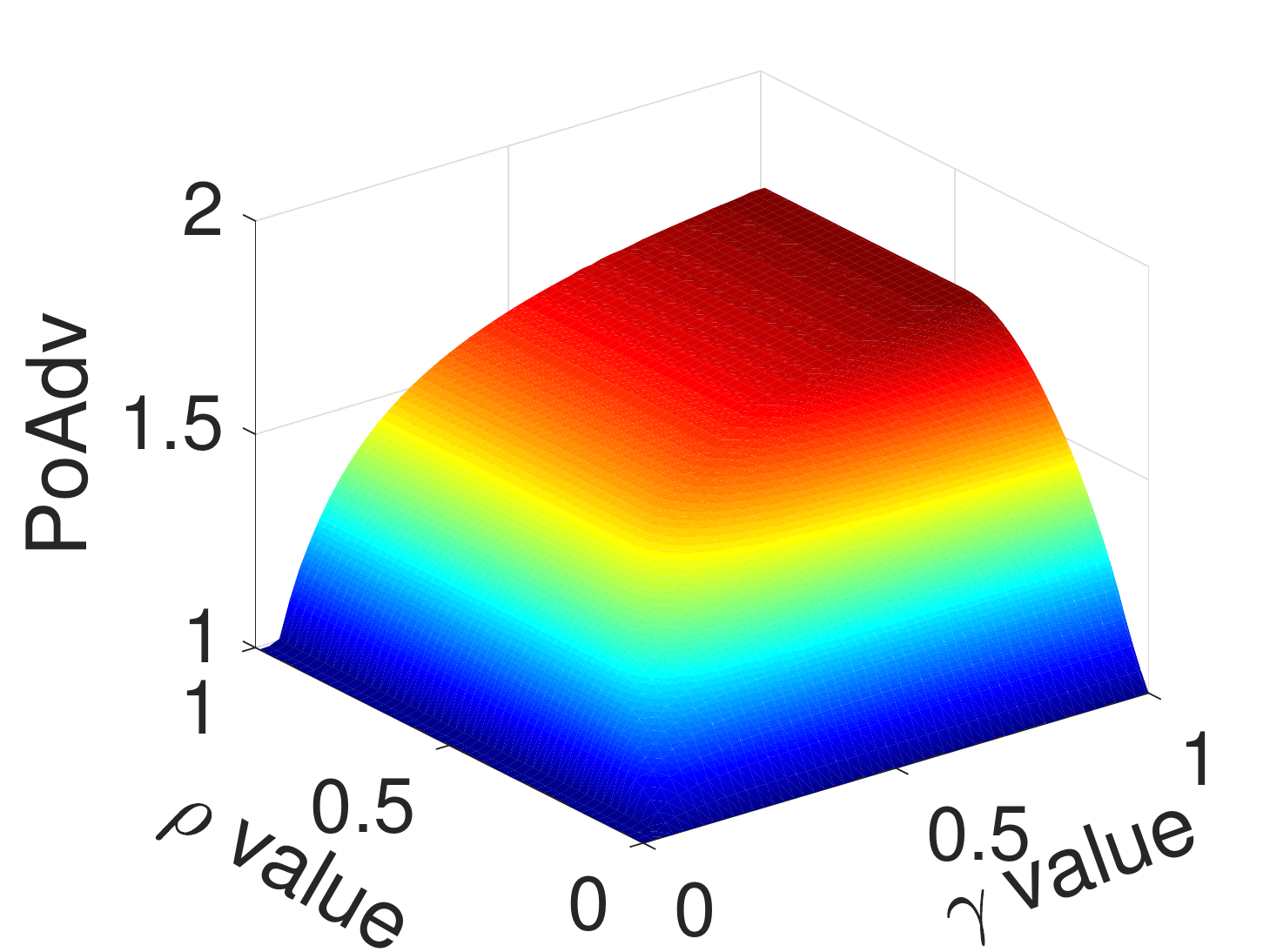}
\caption{$PoAdv$ for optimal adversarial contracts}
\label{fig:ideal}
\end{minipage}\hspace*{\fill}%
\begin{minipage}[t]{0.24\textwidth}
\includegraphics[width=0.99\linewidth,keepaspectratio=true]{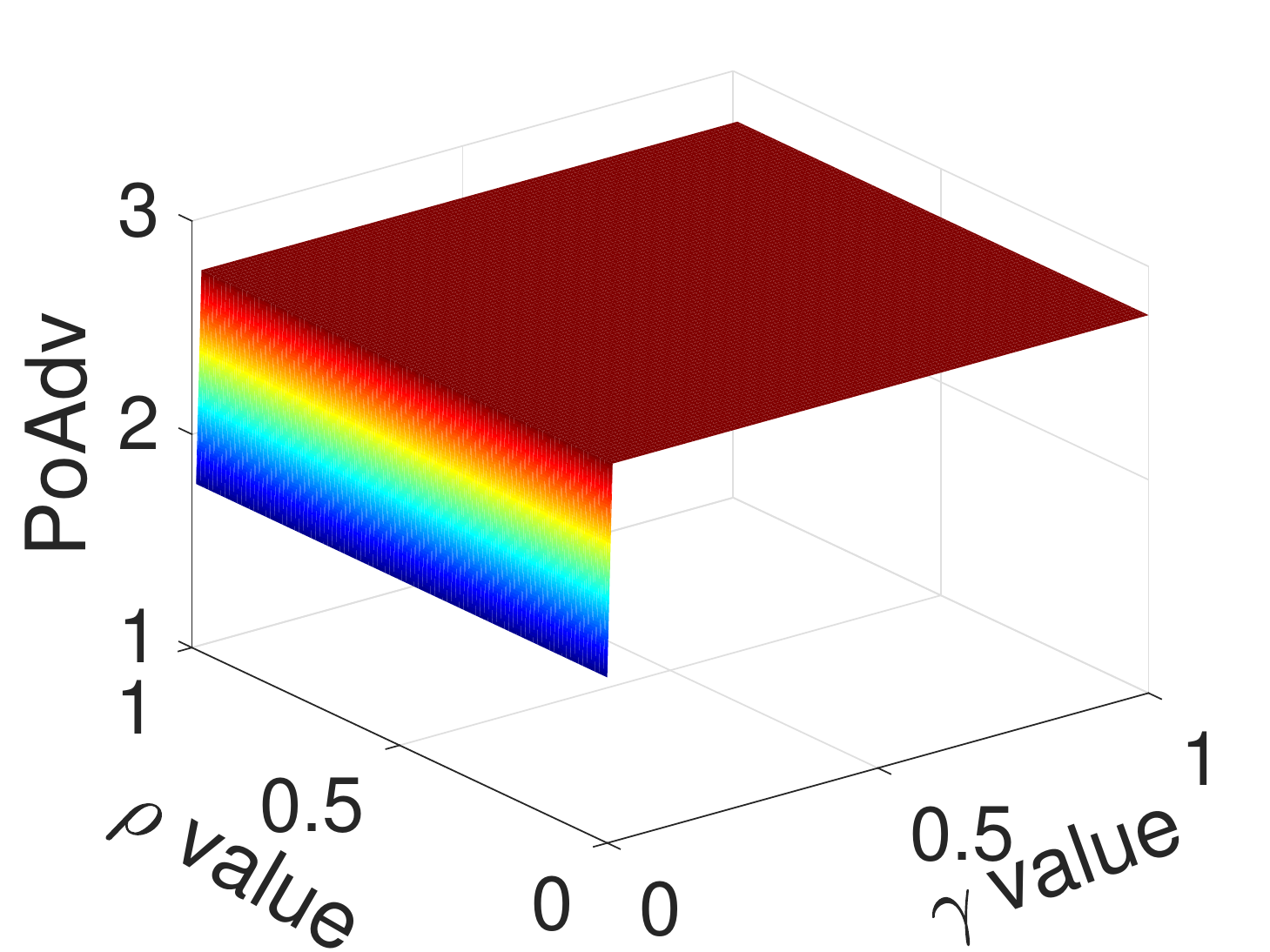}
\caption{$PoAdv$ for approx adversarial contracts}
\label{fig:approx}
\end{minipage}\hspace*{\fill}%
\end{figure*}

Next, the following result establishes the quality of the contracts returned by Algorithm~\ref{algorig} by bounding the $PoAdv$. Recall that we have already shown in Lemma~\ref{l:unboundedPoA} that $PoAdv$ is unbounded in the worst case.

\begin{theorem}\label{th:boundPoASDM}
Let the optimal non-adversarial contract $(p_1^*, \epsilon_i^*), \ldots, (p_n^*, \epsilon_n^*)$ revenue be $R^*$. For the class of well-behaved $C$'s, we have,
\begin{itemize}
\item (High $ C$) $PoAdv$ is unbounded in general. If Algorithm~\ref{algorig} outputs a  contract, then $PoAdv \leq \frac{R^*}{\lambda \min_i \{q_i\}}$. 
\item (Low $ C$) Algorithm~\ref{algorig} always outputs the same contracts as the non-adversarial case, and hence $PoAdv = 1$.
\item (Intermediate $ C$) Alg.~\ref{algorig} always outputs contracts. Then, 
$$PoAdv \leq \frac{R^*}{\max\Big(\widehat{R}^*_{K} - \Delta \frac{\rho}{1- \rho} , R^* - \alpha \frac{\rho}{1- \rho} \Big) } \; .$$
\end{itemize}
\end{theorem}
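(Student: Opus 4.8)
The plan rests on the identity $PoAdv=(1-\rho)R^*/R^*_A$ together with one observation: any contract menu that satisfies every honest IR, SR and IC constraint becomes a feasible point of the adversarial LP once its slack variable is set to $r_A=\max\{0,\max_i(C(\epsilon_i)-p_i-s_i)\}$, so the revenue that menu generates is a lower bound on $R^*_A$. The preceding validity lemma already certifies that the menus output by Algorithm~\ref{algorig} satisfy the honest constraints, so in each of the three cases it remains to (i) evaluate the admissible $r_A$ for that menu, (ii) lower-bound its revenue, and (iii) divide. I also use the fact, noted above, that $PoAdv\ge1$ always.

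\noindent\textbf{Low and High $C$.} For Low $C$, the non-adversarial prices lie on $\mathcal{P}$ while $C$ lies below $\mathcal{P}$ on $(0,1]$, so $C(\epsilon^*_i)-p^*_i\le 0$ for every $i$; hence the admissible $r_A$ for the non-adversarial menu (which Algorithm~\ref{algorig} returns unchanged in this case) is $0$, so $R^*_A\ge(1-\rho)R^*$, $PoAdv\le1$, and with $PoAdv\ge1$ this forces $PoAdv=1$. For High $C$, that $PoAdv$ can be unbounded even here is witnessed by the construction of Lemma~\ref{l:unboundedPoA}, whose adversary cost is a High $C$ yet drives $R^*_A$ to $0$; in that instance $M=\emptyset$ and the algorithm defers to solving the adversarial problem. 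When $M\neq\emptyset$ and $j=\arg\!\max_{k\in M}p^*_k$, $0$-slackness of $(p^*_j,\epsilon^*_j,0)$ makes the adversary's utility from the single offered contract $(p^*_j-\gamma s^*_j,\epsilon^*_j,s^*_j)$ non-positive, so $r_A=0$; meanwhile, by the validity lemma at least one honest type buys, and every buyer of that contract contributes effective revenue $(p^*_j-\gamma s^*_j)+\gamma s^*_j=p^*_j\ge\lambda$. Therefore $R^*_A\ge(1-\rho)\lambda\min_iq_i$, and dividing gives $PoAdv\le R^*/(\lambda\min_iq_i)$.

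\noindent\textbf{Intermediate $C$.} Here $InterCApp$ always returns a menu, so $PoAdv$ is finite. With $\alpha$ and $\beta$ as defined in Algorithm~\ref{alg1}, the non-adversarial menu is feasible with $r_A=\alpha$ and the new menu with $r_A=\beta$, and $InterCApp$ retains whichever yields higher revenue; hence $R^*_A\ge\max\{(1-\rho)R^*-\rho\alpha,\ (1-\rho)\widehat{R}^*_{K}-\rho\beta\}$. The quantitative heart of the case is the bound $\beta\le\Delta$: for a type $i\le K$ we have $\epsilon^*_i\le\epsilon_M$, so $C(\epsilon^*_i)-p^*_i=C(\epsilon^*_i)-\mathcal{P}(\epsilon^*_i)\le\Delta$ by the definition of $\Delta$; for a type $i>K$, $\mathsf{safe}(i)\in E_{\geq K}$, and membership in $E_{\geq K}$ means precisely that some $s^*_{\mathsf{safe}(i)}$ makes $C(\epsilon^*_{\mathsf{safe}(i)})-p^*_{\mathsf{safe}(i)}-s^*_{\mathsf{safe}(i)}\le\Delta$. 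Replacing $\beta$ by $\Delta$ in the lower bound on $R^*_A$ and dividing gives $PoAdv\le R^*/\max\{\widehat{R}^*_{K}-\Delta\rho/(1-\rho),\ R^*-\alpha\rho/(1-\rho)\}$.

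\noindent\textbf{Main obstacle.} I expect the delicate step to be, in the Intermediate case, establishing that the menu $InterCApp$ returns is genuinely feasible for the adversarial LP with the declared $r_A$: the honest IR/SR/IC constraints are supplied by the validity lemma, but one must also check that the chosen $r_A$ dominates $C(\epsilon_i)-p_i-s_i$ for every offered contract once the $\gamma s$ re-pricing of the modified contracts (line~6 of Algorithm~\ref{alg1}) is accounted for, and then pin down $\beta\le\Delta$ from the exact meanings of $K$, $E_{\geq K}$, and $\Delta$-slackness in Definition~\ref{def:slack}. The High $C$ step similarly hinges on the auxiliary fact that the single offered contract is accepted by some honest type, which is where the ID property and the non-adversarial IR constraints carry the argument.
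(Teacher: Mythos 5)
Your proposal is correct and takes essentially the same route as the paper's proof: lower-bound $R^*_A$ by the revenue of the menu Algorithm~\ref{algorig} returns, use $0$-slackness (High $C$) or the fact that $C$ lies below $\mathcal{P}$ (Low $C$) to kill the adversarial term, and in the Intermediate case combine the two candidate menus with the key observation $\beta\le\Delta$ exactly as the paper does. Your LP-feasibility framing and the explicit appeal to Lemma~\ref{l:unboundedPoA} for High-$C$ unboundedness are just slightly more detailed presentations of the same argument.
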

\begin{proof}[Proof Sketch] 
The analysis for High and Low $C$ is straightforward.
For intermediate $C$, depending on whether new contracts or the original contracts is offered, the revenue is max of the revenues in these two scenarios. Then the result follows by noting that $\beta \leq \Delta$.
\end{proof}

\section{Numerical Example}\label{sec:numerical}
While our theory results provide a broad characterization of the problem for a large space of utility functions,  
in this section we illustrate specific points related to the problem parameters, with a numerical example. We use $n=10$ types of honest buyers (except when varying $n$), with $b_i(\epsilon) = i(1 - \exp(-\frac{10\epsilon}{i}))$, 
$C(\epsilon) = 6(\exp(\epsilon)-1)$, and $\phi=0.95$. 

\textbf{Runtime comparison}: Fig.~\ref{fig:runtime} illustrates runtimes when computing the optimal adversarial contracts and optimal non-adversarial contracts. The optimal adversarial contracts take much more time to compute than the non-adversarial contracts and the difference increases exponentially with increase in the size of problem $n$. This shows why approximation is useful; our approximation approach takes almost the same time as the non-adversarial problem, as the approximation steps after solving the non-adversarial problem have (comparatively) negligible runtime.

\textbf{Price of adversary with non-adversarial contracts}: Fig.~\ref{fig:noplan} shows the price of adversary for varying $\gamma$ and $\rho$ when the non-adversarial contracts are offered in an adversarial setting. We observe that the $PoAdv$ rises exponentially with $\rho$. Intuitively, the non-adversarial contracts suffer great losses if adversarial buyers dominate the market. 

\textbf{Price of adversary with optimal adversarial contracts}:
Fig.~\ref{fig:ideal} shows the price of adversary for varying $\gamma$ and $\rho$ when the optimal adversarial contracts are computed exactly. The $PoAdv$ rises with both increasing $\gamma$ and $\rho$. Intuitively, higher $\rho$ represents adversaries' market domination, and higher $\gamma$ is weaker honest users (i.e., more attack-prone). Thus, higher values for both of these parameters cause more loss, leading to higher $PoAdv$.

\textbf{Performance of approximation}: Lastly, Fig.~\ref{fig:approx} shows the price of adversary computed using our approximation approach for varying $\gamma$ and $\rho$. The $C$ that we chose corresponds to an Intermediate $C$.
The $PoAdv$ varies mostly with $\gamma$ and is almost constant throughout at 2.77, except for very small values of $\gamma$ when it is 1.43. For small values of $\gamma$, the approximation algorithm sends back the original contracts as is (line 14 in Algorithm~\ref{alg1}).

\section{Related Work}\label{sec:related}
Our work is within the emerging literature of data commercialization and its challenges \cite{thomas16}. 
%
\cite{ghosh2015selling,gkatzelis2015pricing,li2014theory} study the problem of pricing personal data, where a data seller designs a pricing mechanism which incentivizes data subjects to reveal their private information. 
\cite{niyato2016market} 
design a pricing scheme for selling data to users with differing willingness to pay. 
Our approach differs in that we propose a contract-theoretical framework to accommodate heterogeneous honest buyers as well as adversarial types. More specifically, in contrast to existing work, we posit that honest buyers do not attempt to compromise the privacy of the database, hence every sale of data is not a loss of privacy. Further, by far the practice in real world is for the data seller to obtain data by compensating people in form of a one-shot monetary payment or free service~\cite{datapay}, which is part of our model. This avoids unrealistic mechanisms in which data subjects are paid every time their data is sold to a buyer~\cite{li2014theory}. 


\cite{adam1989security} classified privacy-preserving query approaches into query restriction, data perturbation, and output perturbation. Query auditing (a form of query restriction) aims to determine whether, given the query history, a new query will compromise the database privacy; however, this problem is NP-hard~\cite{kleinberg2003auditing}. In addition, output perturbation mechanisms (including differential privacy) must limit the number of queries in order to maintain any reasonable privacy guarantee~\cite{dinur2003revealing}. Our proposed approach, which is a combination of query restriction with output perturbation, restricts the type and number of queries in light of these impossibility results. 

Contract-theoretical frameworks have been receiving attention as a method for optimal pricing in other application areas, including the design of demand-response programs \cite{meir2017contract}, energy procurement methods \cite{tavafoghi2014optimal}, and incentive mechanisms in crowdsourcing markets \cite{ho2016adaptive}. In contrast, we consider the optimal pricing problem in the presence of both honest and adversarial users. 

Another line of work studies the effects of malicious or spiteful agents in game-theoretical settings, including network inoculation games \cite{moscibroda2006selfish}, sealed-bid auctions and colluding bidders \cite{brandt2007spiteful,micali}, and resource allocation games \cite{chorppath2011adversarial}. 
These works assume that malicious agents aim to minimize the utility of all other users, and analyzes their effect on the Nash equilibria,  in a game-theoretic framework. In contrast, we consider the effects of an adversarial user on the  principal's revenue in a contract-theoretic framework.  

\section{Conclusion}\label{sec:conclusion}
We proposed a novel and practical \emph{adversarial contract design} framework in which a data seller designs a collection of contracts to optimize her revenue in the presence of honest and adversarial users. We quantified the effect of adversaries by proposing the notion of price of adversary, and characterized the effect of fines on optimal revenue. We also presented a fast approximate technique to compute contracts in an adversarial setting. 


\bibliographystyle{aaai}
\bibliography{privacy-contracts.bib}


\appendix

\section*{Appendix}

\subsection*{Theorem~\ref{advcontractproperties}}
\begin{proof}[Proof of Theorem~\ref{advcontractproperties}]
As a shorthand, we will write $p'_i = p_i + \gamma s_i$ throughout.

\textbf{Monotonicity:} First, we claim that for every optimal fixed cost contract we must have $\epsilon_i \geq \epsilon_j$ whenever $i > j$. Let $i > j$. The IC constraints include
$$
b_i(\epsilon_i) - p'_i  \geq b_i(\epsilon_j) - p'_j 
$$
$$
b_j(\epsilon_j) - p'_j  \geq b_j(\epsilon_i) - p'_i 
$$
Adding these, we get
$$
b_i(\epsilon_i) - b_i(\epsilon_j) \geq b_j(\epsilon_i) - b_j(\epsilon_j)
$$
There are two cases (1) $b_i(\epsilon_i) - b_i(\epsilon_j) > b_j(\epsilon_i) - b_j(\epsilon_j)$ or (2) $b_i(\epsilon_i) - b_i(\epsilon_j) = b_j(\epsilon_i) - b_j(\epsilon_j)$. For case (1), we can claim that $\epsilon_i \geq \epsilon_j$ using non-strict ID of the benefits functions. The proof is by contradiction. Assume $\epsilon_i < \epsilon_j$; then, by non-strict ID we must have $b_i(\epsilon_i) - b_i(\epsilon_j) \leq b_j(\epsilon_i) - b_j(\epsilon_j)$ which violates case (1). Hence under case (1) $\epsilon_i \geq \epsilon_j$. \adv{
As the reasoning here is not based on the seller's objective value or the adversarial type's constraints, we do not need to consider adversarial aspects here.}

Next, under case (2), let $K = b_i(\epsilon_i) - b_i(\epsilon_j) = b_j(\epsilon_i) - b_j(\epsilon_j)$. First, if $K$ is $\geq 0$ then $\epsilon_i \geq \epsilon_j$ when $b_i$ and $b_j$ are both strictly monotone increasing. \adv{As the reasoning here is not based on the objective value or the adversarial constraints, we do not need to consider adversarial aspects here.} The case when  $b_i$ and $b_j$ are both monotone non-decreasing has to be dealt in a special way (see after the $K<0$ case below). 

Thus, the only scenario left to analyze is $K < 0$. Then the two IC inequalities stated at the start can be re-written as $K \geq p'_i - p'_j$ and $-K \geq - (p'_i - p'_j)$ which implies $p'_i - p'_j = K$, or $p'_i < p'_j $. Also, $b_i(\epsilon_i) - p'_i = b_i(\epsilon_j) + K - p'_j - K = b_i(\epsilon_j) - p'_j$, so that contract $i$ and $j$ are both equally and \emph{most} preferred by $i$ (and similarly by $j$). Then offer
another set of contracts in which $i$ is offered $(p_j,\epsilon_j,s_j)$, and others are offered their earlier contract. In this new contract, all of the IC constraints are still satisfied as type $i$ preferred $(p_j,\epsilon_j,s_j)$ the most and equally preferred the now unavailable $(p_i,\epsilon_i,s_i)$. For any other type they prefer their allocation and price to $(p_j,\epsilon_j,s_j)$ as was the case for the earlier set of contracts. Also, since $b_i(\epsilon_i) -p'_i = b_i(\epsilon_j) - p'_j$ and earlier contract's IR provided $b_j(\epsilon_i) -p'_i \geq 0$, we have the new contract's IR is also satisfied $b_i(\epsilon_j) - p'_j \geq 0$. The $SR_i$ is also trivially satisfied since $SR_j$ was satisfied.
In this new set of contracts, as $p'_j > p'_i$ the revenue from $i$ increases and all other honest users provide same revenue as earlier, thus, the operator's revenue from the honest users strictly increases. \adv{Finally, we need to analyze the adversaries incentives in this new collection of contracts. For the adversary, the new set of contracts provides fewer options to choose from; thus, for any choice made by the adversary in the new contract regime, he obtains less or equal utility to that from the original contract set. As the operator's utility is zero-sum with the adversary's utility, the contribution from the adversarial part of the operator's revenue either increases or stays the same in the new set of contracts.} Thus, putting these together, we have found a new, feasible set of contracts, that strictly outperforms the original set of contracts, contradicting the optimality of the original set. Hence, we cannot have $K < 0$. 

\emph{Special case (non-decreasing $b_i$ and $b_j$)}: The case when  $b_i$ and $b_j$ are both monotone non-decreasing requires to treat the special case of $K=0$ separately. Thus, reasoning exactly like the $K<0$ case we get that $p'_i = p'_j$ and $b_i(\epsilon_i) = b_i(\epsilon_j)$ and $(p_i,\epsilon_i,s_i)$ and $(p_j,\epsilon_j,s_j)$ are both equally preferred by $i$. Now, if $\epsilon_i \geq \epsilon_j$ we are done, but if not we can offer $(p_j,\epsilon_j,s_j)$ to $i$. Following an argument similar to the case of $K<0$, the new set of contracts would satisfy all IR, SR, and IC constraints of the honest types. From the seller's viewpoint, the overall revenue from legitimate users remains the same as the original set of contracts. \adv{Further, following an argument similar to case $K<0$, the contribution from the adversarial part of the revenue either increases or stays the same with the new set of contracts.} 
Therefore, for this special case, we can claim that if $\epsilon_i<\epsilon_j$, the set of contracts is revenue equivalent (or even suboptimal to) a collection of contracts with $\epsilon_i=\epsilon_j$. We conclude that at the optimal contract, $\epsilon_i\geq \epsilon_j$ for this case as well.

{\bf Constraint-set reduction:} Next, we move on to the IC and IR constraints' properties. We start with the IR constraints. Starting from $IR_i$, we have, 
\begin{align*}
b_i(\epsilon_i) - p'_i &\geq b_i(\epsilon_{i-1}) - p'_{i-1}\\ 
&\geq b_{i-1}(\epsilon_{i-1})- p'_{i-1}~,
\end{align*}
where the first line follows from $IC_{i, i-1}$, and the second line by the assumption on ordering of the benefit functions, i.e., $b_i(l)\geq b_j(l), \forall i>j, \forall l$. Thus, if $IR_{i-1}$ is satisfied so is $IR_i$. Hence, given $IR_1$ is satisfied, all other IR constraints are redundant. \adv{
As the reasoning here is not based on objective value or adversary constraints, this assertion holds both with and without adversarial types.}

Next, we consider the (IC) constraints. 
By $IC_{i-1,i-2}$ we have $b_{i-1}(\epsilon_{i-1}) - p'_{i-1} \geq b_{i-1}(\epsilon_{i-2}) - p'_{i-2}$, which can be rearranged as $b_{i-1}(\epsilon_{i-1}) - b_{i-1}(\epsilon_{i-2})  \geq p'_{i-1} - p'_{i-2}$. 
By non-strict increasing difference and, as shown earlier, the monotonicity of access levels, $\epsilon_{i-1} \geq \epsilon_{i-2}$, we get,
\[b_{i}(\epsilon_{i-1}) - b_{i}(\epsilon_{i-2}) \geq b_{i-1}(\epsilon_{i-1}) - b_{i-1}(\epsilon_{i-2})  \geq p'_{i-1} - p'_{i-2}.\]
Thus, $b_{i}(\epsilon_{i-1}) - p'_{i-1} \geq b_{i}(\epsilon_{i-2}) - p'_{i-2}$. By $IC_{i,i-1}$, we have $b_i(\epsilon_i) - p'_i \geq b_i(\epsilon_{i-1}) - p'_{1-i}$, and hence we can infer that $b_i(\epsilon_i) - p'_i \geq b_{i}(\epsilon_{i-2}) - p'_{i-2}$. Thus, given the local downward IC constraints $IC_{i-1,i-2}$ and $IC_{i,i-1}$, the $IC_{i,i-2}$ constraint is redundant;  similarly, all $IC_{i, i -k}$ constraints are redundant for $k \geq 2$. 
Next, for the local upward IC constraints, starting from $IC_{i+1,i+2}$, we have $b_{i+1}(\epsilon_{i+1}) - p'_{i+1} \geq b_{i+1}(l_{i+2}) - p'_{i+2}$, which can be rearranged as $p'_{i+2} - p'_{i+1} \geq b_{i+1}(l_{i+2}) - b_{i+1}(\epsilon_{i+1})$. Again, by non-strict increasing difference and monotonicity $\epsilon_{i+1} \geq l_{i}$, we'll get $b_{i}(\epsilon_{i+1}) - p'_{i+1} \geq b_{i}(l_{i+2}) - p'_{i+2}$. Thus, we conclude that given the local upward IC constraints, all other upward constraints $IC_{i, i + k}$ for $k \geq 2$ are redundant. Hence, only the local $IC_{i, i +1}$ and $IC_{i, i -1}$ constraints are non-redundant. \adv{As the reasoning here is not based on objective value or adversary constraints, the arguments remain valid in the presence of adversaries.}

Next, we show that the local upward IC constraints $IC_{i, i + 1}$ is also redundant. For contradiction, suppose we solve the optimization problem without the $IC_{i,i+1}$ constraint, and get the set of contracts $\{p_j, \epsilon_j, s_j\}$ that maximize the operator's revenue. This solution should strictly violate $IC_{i,i+1}$ (since we are assuming $IC_{i,i+i}$ is not redundant). Therefore, type $i$ will strictly prefer the contract $\{p_{i+1}, \epsilon_{i+1}, s_{i+1}\}$, that is, $b_i(\epsilon_{i+1}) - b_i(\epsilon_i) > p'_{i+1} - p'_i$.  
We now modify the contracts by increasing $p_{j}, \forall j\geq i+1$ by a small amount $\epsilon>0$, i.e., we offer the contract $\{p_{i+1} + \epsilon, \epsilon_{i+1}, s_{i+1}\}$ to type $i+1$, as well as contracts $\{p_{j} + \epsilon, \epsilon_{j}, s_j\}$ for all $j > i+1$.  We chose $\epsilon$ small enough so that $IC_{i,i+1}$ remains strictly violated.

We know from the violation of  $IC_{i,i+1}$ that $b_i(\epsilon_{i+1}) - b_i(\epsilon_i) > p'_{i+1} - p'_i$, and also, by non-strict increasing differences, that $b_{i+1}(\epsilon_{i+1}) - b_{i+1}(\epsilon_i) > p'_{i+1} - p'_i$, or rearranging $b_{i+1}(\epsilon_{i+1}) - p'_{i+1}  > b_{i+1}(\epsilon_i) - p'_i$; thus, $IC_{i+1,i}$ is satisfied with $\{p_{i+1} + \epsilon, \epsilon_{i}, s_i\}$. 
For all other local upward IC constraints of types $i+1$ and higher (i.e, $IC_{i+1,i+2}$, $IC_{i+2,i+1}$, $IC_{i+2,i+3}$ and so on), the prices on both sides of the constraint change by an equal amount in the modified contract set. Therefore, these constraints continue to hold. For all other IC constraints there is no change in variable values and they continue to hold. The $IR_1$ constraint is also unaffected as the contract does not change for type $1$. All SR constraints still hold as only prices increased.
\adv{For the adversary, the contracts in the new collection are either the same (if he was purchasing one of the unaltered contracts) or become less attractive (if he was purchasing the altered contract). Thus, for any choice made by the adversary in the new contract regime, he obtains either less or the same utility as the original contract set. As the seller's and adversary's utilities are zero-sum, the contribution from the adversarial part of the revenue either increases or stays the same following the change in the contracts.}
Thus, this new set of contracts provides higher revenue to the operator, contradicting the optimality of the original set of contracts. We conclude that all local upward IC constraints should be redundant. 


{\bf $IR_1$ is redundant, i.e., no information rent for the lowest type:} we prove this by contradiction. Suppose $IR_1$ is not binding; then, the operator can increase $p_1$ slightly without violating $IR_1$ (and trivially not violating $SR_1$). The only other constraint in which  in which $p_1$ appears is the LHS of the downward IC constraint $IC_{21}$. An increase in $p_1$ will lower the LHS, and hence this constraint will not be violated either. 
Therefore, the operator's portion of the revenue from legitimate users is strictly increasing with this increase in $p_1$. 
\adv{From the adversary's viewpoint, the new set of contracts (with an increased $p_1$ in the lowest type's contract) will either stay the same or becomes less attractive. Thus, for any choice made by the adversary in the new contract regime, he obtains less or equal utility to his utility in the original contract set. As the seller's revenue portion from the adversarial type's participation is the negative of the adversary's utility, the contribution from the adversarial part of the revenue will either increase or stay the same given the increase in $p_1$.} Thus, the modification of the price $p_1$ will lead to a feasible set of contracts that strictly increases the operator's revenue, contradicting the optimality of the original contract set. We thus conclude that $IR_1$ should be binding in the optimal contract set, so that $p_1^*+\gamma s_1^*=b_1(\epsilon_1^*)$. 


{\bf $(IC_{i,i-1})$ is binding, i.e., information rent for higher types:} finally, we show that all the $IC_{i, i -1}, \forall i\geq 2$ constraints are binding. For contradiction, suppose $IC_{i, i - 1}$ is not binding. Then we can increase $p_{i}$ by $\epsilon$ without violating this constraint. In all remaining local downward IC constraint, $p_i$ only appears on the LHS of $IC_{i+1,i}$; the increase in $p_i$ will therefore not violate this constraint. 
In addition, $IR_1$ will not be affected and also $SR_i$ constraint will not be violated as $p_i$ only appears on the LHS of $SR_i$, and the revenue of the operator from legitimate users will strictly increase following this change. \adv{For the adversary, for all contracts in the new set of contract, the contract either stays the same or becomes less attractive for the adversary (due to higher price). Thus, for any choice made by the adversary in the new contract regime, he obtains less or equal utility to that from the original contract set. As the seller receives the negative of the adversary's utility, the contribution from the adversarial part of the objective either increases or stays the same, and hence the overall revenue of the operator increases with the modified contract set.} This provides a contradiction to the optimality of the initial contracts. Therefore, the local downward IC constraints should be binding, leading to, 
$$p^*_{i+1}+\gamma s^*_{i+1} = b_{i+1}(l^*_{i+1}) - \sum_{j=1}^{i} \big( b_{j+1}(\epsilon^*_{j}) - b_{j}(\epsilon^*_j) \big )~.$$
In contract theory literature, the term $\sum_{j=1}^{i} \big( b_{j+1}(\epsilon^*_{j}) - b_{j}(\epsilon^*_j) \big )$ is known as the information rent of type $i$; it is the discount this type gets over the first-best prices that could be attained with full information about the users' types. 
\end{proof}

\subsection*{Proposition~\ref{noadvcontract}}
\begin{proof}[Proof of Proposition~\ref{noadvcontract}]
First, we will prove that for any optimal solution with non-zero $s_i$'s there is a revenue (objective) equivalent solution with all $s_i$ zero. The transformation is simple: given any solution $(p_i, \epsilon_i, s_i)$, the contract $(p_i + \gamma s, \epsilon_i, 0)$ is feasible and revenue optimal. The revenue stays the same, which trivially follows from the objective function. All constraints, except SR, have the term $p_i + \gamma s$, and hence they are satisfied. The SR constraints are trivially satisfied as $s_i$ is zero in the new contract.

Next, with contracts for which $s_i$ is 0, the optimization reduces to 
\begin{eqnarray*}
\max_{(p_i, \epsilon_i)_{i \in \Theta}} &  \sum_{i=1}^n q_i p_i  \\
\mbox{subject to} & IR_i \; \forall i \mbox{ and } IC_{i,j} \; \forall i,j \mbox{ and }  p_i,\epsilon_i \geq 0 \; \forall i 
\end{eqnarray*}
This is exactly same as the classic contract theory problem, and the conditions of Theorem~\ref{advcontractproperties} (with $s_i^* = 0$) follows from the seminal work by ~\cite{maskin1984monopoly}
\end{proof}

\subsection*{Lemma~\ref{l:unboundedPoA}}
\begin{proof}[Proof of Lemma~\ref{l:unboundedPoA}]
Consider a problem with two types of users $H$ and $L$.
Let the benefit function be $\log(1+\epsilon)$ for the lower type $L$ and $2 \log(1+\epsilon)$ for the higher type $H$. Given the user is not an adversary, the user is a lower type with probability $q$ and higher type with probability $1-q$.
The function $C$ for adversary is $K (\exp(\epsilon) -1)$, where $K$ will be chosen below. For now, let $K \geq 2 + 2(1-\gamma)/\gamma$.

For the adversarial revenue maximization case we will show the revenue is $0$. Let the contract with the adversary with $(p_L,\epsilon_L, s_L)$ and $(p_H,\epsilon_H,s_H)$. To show $0$ revenue we will show that $\epsilon_H = 0$ (and since $\epsilon_L \leq \epsilon_H$, $\epsilon_L = 0$). We do so by contradiction, whereby assume $\epsilon_H > 0$. First, it directly follows from Theorem~\ref{advcontractproperties} that $p_L + \gamma s_L = \log(1+\epsilon_L)$ and $p_H + \gamma s_H \leq 2\log(1+\epsilon_H)$. Next, as $\epsilon_H \geq \epsilon_L$, we have $p_L + \gamma s_L \leq \log(1+\epsilon_H)$.

The adversary never rejects the higher contract, since for any $\epsilon_H \in (0,1]$.
\begin{eqnarray*}
K(\exp(\epsilon_H) -1) &\geq& (2 + 2(1-\gamma)/\gamma)(\exp(\epsilon_H) -1) 
\\
& \geq & (2 + 2(1-\gamma)/\gamma)\log(1+\epsilon_H) \\
& \geq & 2\frac{1-\gamma}{\gamma}\log(1+\epsilon_H) + 2\log(1+\epsilon_H)\\
& \geq & 2\frac{1-\gamma}{\gamma}\log(1+\epsilon_H) + p_H + \gamma s_H
\end{eqnarray*}
Now, as $2\log(1+\epsilon_H) \geq  \gamma s_H$ (since $2\log(1+\epsilon_H) \geq p_H + \gamma s_H$ and $p_H \geq 0$), then the adversary does not reject the higher contract as 
$$2 \frac{1-\gamma}{\gamma}\log(1+\epsilon_H) + p_H + \gamma s_H \geq p_H + s_H$$
The above also provides the inequality
\begin{equation}
    (2 + 2(1-\gamma)/\gamma)\log(1+\epsilon_H) \geq p_H + s_H \label{temp123}
\end{equation}

Then, the adversary either chooses the lower or higher contract.
Then, the seller's utility is $(1-\rho)(q*(p_L + \gamma s_L) + (1-q)*(p_H + \gamma s_H)) + \rho[p_{Z} + s_Z - K(\exp(\epsilon_Z) -1)]$, where $Z$ is either $L$ or $H$. If $Z=H$, then the revenue is $(1-\rho)(q*(p_L + \gamma s_L) + (1-q)*(p_H + \gamma s_H)) + \rho[p_{H} +s_H - K(\exp(\epsilon_H) -1)]$. Then, observe that if $Z=L$, it means the adversary found $L$ more attractive, that is, $-[p_{L} + s_L - K(\exp(\epsilon_L) -1)] \geq -[p_{H} + s_H - K(\exp(\epsilon_H) -1)]$.  Thus, it can be said that $(1-\rho)(q*(p_L + \gamma s_L) + (1-q)*(p_H + \gamma s_H)) + \rho[p_{H} +s_H - K(\exp(\epsilon_H) -1)]$ is an upper bound on the revenue.

Next, $(1-\rho)(q*(p_L + \gamma s_L) + (1-q)*(p_H + \gamma s_H)) + \rho (p_{H} + s_H)$ must be less than $p_L + \gamma s_L + p_H + s_H$, which by previous inequality number~\ref{temp123} and $p_L + \gamma s_L \leq \log(1+\epsilon_H)$ is 
\begin{eqnarray*}
& \leq & \log(1+\epsilon_H) + (2 + 2(1-\gamma)/\gamma)\log(1+\epsilon_H)\\
& = & 3\log(1+\epsilon_H) + 2\frac{1-\gamma}{\gamma}\log(1+\epsilon_H)
\end{eqnarray*}
Now, choose 
$$K = 10/\rho + 2\frac{1-\gamma}{\rho\gamma} \; ,$$ which is clearly $\geq 2 + 2(1-\gamma)/\gamma$ also. Then, 
\begin{eqnarray*}
\rho K [\exp(\epsilon_H) -1] & = & (10 + 2(1-\gamma)/\gamma)* (\exp(\epsilon_H) -1)\\ & \geq & 10 \log(1+\epsilon_H) + 2\frac{1-\gamma}{\gamma}\log(1+\epsilon_H)
\end{eqnarray*}
Hence, the upper bound on revenue $(1-\rho)(q*(p_L + \gamma s_L) + (1-q)*(p_H + \gamma s_H)) + \rho[p_{H} +s_H] - \rho K(\exp(\epsilon_H) -1)]$ is less than $- 7 \log(1+\epsilon_H)$ which is strictly negative for positive $\epsilon_H$, and thus, the revenue is negative. This contradicts the optimality of $\epsilon_H$ as $0$ revenue is obtained with $p_H, \epsilon_H, s_H = 0$.
We conclude that $\epsilon_H=\epsilon_L=0$, so that $R^*_A=0$. 

On the other hand, without adversarial types,  the operator can attain positive revenue. This is because the seller's problem (using Theorem~\ref{advcontractproperties}) will be to maximize 
\begin{eqnarray*}
&& qb_L(\epsilon_L)  + (1-q)[b_H(\epsilon_H) - b_H(\epsilon_L) + b_L(\epsilon_L)] \\
&& = b_L(\epsilon_L) - (1-q) b_H(\epsilon_L) + (1-q)b_H(\epsilon_H)
\end{eqnarray*}
Hence, it is optimal to choose $\epsilon_H=1$, leading to a lower bound of $R^*\geq (1-q)2\log 2$ on the non-adversarial revenue. 
\end{proof}

\subsection*{Lemma~\ref{simple}}
\begin{proof}[Proof of Lemma~\ref{simple}]
From Theorem~\ref{advcontractproperties}, we know that $b_j(\epsilon_j) - p_j = b_j(\epsilon_{j-1}) - p_{j-1}$, or equivalently, $b_j(\epsilon_j) - b_j(\epsilon_{j-1}) = p_j - p_{j-1}$. Using the ID property of the benefit functions, for $i> j$ we get $b_i(\epsilon_j) - b_i(\epsilon_{j-1}) \geq b_j(\epsilon_j) - b_j(\epsilon_{j-1}) = p_j - p_{j-1}$, hence $b_i(\epsilon_j) - p_j \geq b_i(\epsilon_{j-1})- p_{j-1}$. Thus, $i$ prefers contract $j$ to $j-1$. Arguing inductively, we have the required result. 
\end{proof}

\subsection*{Theorem~\ref{th:boundPoASDM}}
\begin{proof}[Proof of Theorem~\ref{th:boundPoASDM}]
For High $C$, if a contract is offered, the adversary will not choose this contract due to $0$-slack, but at least user $i$ will choose it. Thus, the revenue is at least $(1-\rho) \lambda \min_i q_i $. For low $C$, clearly the adversary does not choose any contract, and $R^*_A = (1-\rho)R^*$.

For intermediate $C$, the revenue from the contracts output by Algorithm~\ref{algorig} is $\widehat{R}^*_K$. The adversary may choose any of the contracts. If the choice is $Z \leq K$ (all of which have fine $0$), then by definition of intermediate functions and $K$, $ C(\epsilon^*_Z) - p^*_{Z} \leq \Delta$ for all such $Z$. If  $Z > K$, then since the offered contracts $> K$ are all $\Delta$-slack, we again have $C(\epsilon^*_Z) - p^*_{Z} - s^*_Z\leq \Delta$. Thus, $\beta \leq \Delta$, and the revenue is lower bounded by $(1 - \rho)\widehat{R}^*_K - \rho \Delta$.
Finally, by not changing the original non-adversarial contracts the operator obtains a revenue $(1-\rho)R^* - \rho \alpha$. Thus, the revenue in the presence of adversaries is bounded by the maximum of either of these two lower bounds.
\end{proof}

\end{document}